\DeclareFontFamily{U}{mathx}{\hyphenchar\font45}
\DeclareFontShape{U}{mathx}{m}{n}{
      <5> <6> <7> <8> <9> <10>
      <10.95> <12> <14.4> <17.28> <20.74> <24.88>
      mathx10
      }{}
\DeclareSymbolFont{mathx}{U}{mathx}{m}{n}
\definecolor{DarkRed}{rgb}{0.5,0.1,0.1}
\definecolor{DarkBlue}{rgb}{0.1,0.1,0.5}
\definecolor{ForestGreen}{rgb}{0.1333,0.5451,0.1333}
\definecolor{Red}{rgb}{0.9,0,0}
\crefname{property}{property}{Property}
\crefname{equation}{eq}{Eq}
\tikzset{vertex/.style={circle, black, fill=Yellow, line width=1pt, draw, minimum width=8pt, minimum height=8pt, inner sep=0pt}}
\def\BState{\State\hskip-\ALG@thistlm}
\crefname{fact}{Fact}{Facts}
\newenvironment{Ourbox}{\begin{mdframed}[hidealllines=false,innerleftmargin=10pt,backgroundcolor=white!10,innertopmargin=10pt,innerbottommargin=10pt,roundcorner=10pt]}{\end{mdframed}}
\renewcommand{\leq}{\leqslant}
\renewcommand{\geq}{\geqslant}
\DeclarePairedDelimiter{\bracket}[]
\DeclarePairedDelimiter{\paren}()
\DeclarePairedDelimiter{\card}{\vert}{\vert}
\DeclarePairedDelimiter{\set}{\{}{\}}
\DeclareMathOperator*{\distrib}{dist}
\DeclareMathOperator*{\support}{supp}
\DeclareMathOperator*{\kldiv}{\mathbb{D}}
\DeclareMathOperator*{\entropy}{\mathbb{H}}
\DeclareMathOperator*{\inform}{\mathbb{I}}
\DeclareMathOperator*{\expect}{\mathbb{E}}
\DeclareMathOperator*{\variance}{Var}
\DeclareMathOperator*{\covariance}{CoVar}
\newcommand{\Ot}{\widetilde{O}}
\DeclarePairedDelimiterXPP{\Omgt}[1]{\widetilde{\Omega}}(){}{#1}
\DeclarePairedDelimiterXPP{\BigO}[1]{O}(){}{#1}
\NewDocumentCommand{\Prob}{sO{}E{_}{{}}m}{%
  \Pr_{#3}
  \IfBooleanTF{#1}
  {\bracket*{#4}}
  {\bracket[#2]{#4}}
}
\NewDocumentCommand{\Exp}{sO{}E{_}{{}}m}{%
  \expect_{#3}
  \IfBooleanTF{#1}
  {\bracket*{#4}}
  {\bracket[#2]{#4}}
}
\DeclarePairedDelimiterXPP{\Var}[1]{\variance}[]{}{#1}
\DeclarePairedDelimiterXPP{\Cov}[1]{\covariance}[]{}{#1}
\DeclarePairedDelimiterXPP{\eexp}[1]{\exp}(){}{#1}
\NewDocumentCommand{\Dist}{sO{}E{_}{{}}m}{%
  \distrib_{#3}
  \IfBooleanTF{#1}
  {\paren*{#4}}
  {\paren[#2]{#4}}
}
\DeclarePairedDelimiterXPP{\Supp}[1]{\support}(){}{#1}
\DeclarePairedDelimiterXPP{\KL}[2]{\kldiv}(){}{#1 \;\delimsize\|\; #2}
\DeclarePairedDelimiterXPP{\Ent}[1]{\entropy}(){}{#1}
\DeclarePairedDelimiterXPP{\Inf}[1]{\inform}(){}{#1}
\newcommand{\eps}{\varepsilon}
\newcommand{\poly}{\mbox{\rm poly}}
\newcommand{\IR}{\mathbb{R}}
\newcommand{\IN}{\mathbb{N}}
\newenvironment{tbox}{\begin{tcolorbox}[
		enlarge top by=5pt,
		enlarge bottom by=5pt,
		 breakable,
		 boxsep=0pt,
                  left=4pt,
                  right=4pt,
                  top=10pt,
                  arc=0pt,
                  boxrule=1pt,toprule=1pt,
                  colback=white
                  ]
	}
{\end{tcolorbox}}
\newcommand{\mireal}[1][]{
  \ifx\relax#1\relax%
    \II(\mione \,; \mitwo)%
  \else%
    \II(\mione \,; \mitwo\mid #1)%
  \fi
}
\title{A Simple \texorpdfstring{$(1-\eps)$}{(1-eps)}-Approximation Semi-Streaming Algorithm for Maximum (Weighted) Matching} 
\begin{document}
\maketitle



\begin{abstract}


	We present a simple semi-streaming algorithm for $(1-\eps)$-approximation of bipartite matching in $O(\log{\!(n)}/\eps)$ passes. This 
	matches the performance of state-of-the-art ``$\eps$-efficient'' algorithms---the ones with much better dependence on  $\eps$ albeit with some mild dependence on $n$---while being considerably simpler. 
	
	
	The algorithm relies on a direct application of the multiplicative weight update method with a self-contained primal-dual analysis that can be of  independent interest. 
	To showcase this, we use the same ideas, alongside standard tools from matching  theory, to present an equally simple semi-streaming algorithm for $(1-\eps)$-approximation of weighted matchings in general (not necessarily bipartite) 
	graphs, again in $O(\log{\!(n)}/\eps)$ passes. 
\end{abstract}






\section{Introduction}\label{sec:intro}

We consider the maximum matching problem in the semi-streaming model of~\cite{FeigenbaumKMSZ05}: given any $n$-vertex graph $G=(V,E)$ whose edges are presented in a stream, the goal is to make a minimal number of passes
over this stream and use a limited space of $\Ot(n):= O(n \cdot \poly\!\log{(n)})$ bits to output a $(1-\eps)$-approximate maximum matching of $G$ for some given $\eps > 0$. 

The maximum matching problem is arguably the most studied problem in the graph streaming literature at this point (see, e.g.~\cite{AssadiS23} for a quick summary). 
Most relevant to our work, the first $(1-\eps)$-approximation algorithm for maximum cardinality matching
was designed by~\cite{McGregor05} which requires $(1/\eps)^{O(1/\eps)}$ passes. 
This algorithm has since been improved numerous times~\cite{EggertKS09,AhnG11,AhnG11b,EggertKMS12,Kapralov13,AhnG15,BehnezhadDETY17,Tirodkar18,GamlathKMS19,AssadiLT21,FischerMU22,AssadiJJST22,HuangS23} culminating in the state-of-the-art, consisting of two incomparable families of algorithms:  

\begin{itemize}[itemsep=5pt]
	\item \textbf{``Constant pass'' algorithms.} We have the algorithm of~\cite{AssadiLT21} (and its precursor~\cite{AhnG11}) with $O(1/\eps^2)$ passes for {bipartite} graphs and that of~\cite{FischerMU22} with $\poly{(1/\eps)}$ passes for general graphs. Similarly, for weighted graphs, 
	we have the algorithm of~\cite{AhnG11} with $O(\log{(1/\eps)}/\eps^2)$ passes for bipartite graphs and~\cite{HuangS23} with $\poly{(1/\eps)}$ passes for general graphs\footnote{It is worth mentioning that the dependence on
	$\eps$ in \cite{FischerMU22, HuangS23} is quite high -- it appears to be $O((1/\eps)^{19})$ passes in~\cite{FischerMU22} (see Lemma 5.6 of arXiv version 5) and can only be higher in~\cite{HuangS23}. \label{footnote:eps}}. 

	\item \textbf{``$\eps$-efficient'' algorithms.\footnote{We use this term to refer to algorithms that have  much better dependence on parameter $\eps$ compared to the above line of work at the cost of having some mild dependence on $n$, which
	is almost always an $O(\log{n})$ factor.}} We have the algorithm of~\cite{AhnG15} (and its precursor~\cite{AhnG11b}) for weighted general graphs with $O(\log{\!(n)}/\eps)$ passes, and a simpler and space-optimal algorithm of~\cite{AssadiJJST22}
	with $O(\log{n}\cdot\log{(1/\eps)}/\eps)$ passes that is specific to bipartite cardinality matching.  
\end{itemize}

See~\Cref{tab:results} for a detailed summary. Nevertheless, despite significant progress in bringing down the pass-complexity of more general cases, for the most basic version of the problem, namely, \emph{maximum (cardinality) bipartite matching (MBM)}, 
the best bounds have been stuck at $O(1/\eps^2)$  and $O(\log{\!(n)}/\eps)$ passes for over a decade now (since~\cite{AhnG11} and~\cite{AhnG11b}, respectively). On the other hand, 
even the recent advances on multi-pass graph streaming lower bounds in~\cite{AssadiR20,ChenKPSSY21,Assadi22,AssadiS23,KonradN24} only rule out $o(\log{(1/\eps)})$-pass algorithms for MBM~\cite{AssadiS23} (under a certain combinatorial hypothesis), leaving an exponential gap open for progress on both ends. 

In our opinion, one contributing factor to the lack of algorithmic progress is the fact that the $O(\log{\!(n)}/\eps)$-pass algorithms of~\cite{AhnG11b,AhnG15} are quite complicated (even for MBM). While some simplifications have been 
made in~\cite{AssadiJJST22}, even this new algorithm is far from being simple. This is in contrast with constant-pass algorithms that, at least for MBM, 
admit quite simple algorithms in~\cite{AssadiLT21} (even already from~\cite{AhnG11}). The goal of this paper is to remedy this state of affairs. 

\begin{table}[h!]

{\smaller
\begin{tabular}{|c | c | c | c | c | c | c |}

 \hline
  ~~~~Citation~~~~ & ~~Space~~ & ~~~~Passes~~~~ & Bip./Gen. & Size/Weight & Det./Rand.  \\ 
  \hline\hline
  \multicolumn{6}{|c|}{``constant-pass'' algorithms} \\
  \hline\hline
  \cite{McGregor05} & $\Ot_{\eps}(n)$ & $(1/\eps)^{O(1/\eps)}$ & Gen. & Size & Rand.  \\
  \hline
  \cite{EggertKS09} & $O(n\log{n})$ & $O(1/\eps^{8})$ & Bip. & Size & Det.  \\
  \hline
  \cite{AhnG11} & $\Ot(n \cdot \poly{(1/\eps)})$ & $O((1/\eps^2) \cdot \log{(1/\eps)})$ & Bip. & Weight & Det. \\ 
  \hline
  \cite{EggertKMS12} & $O(n\log{n})$ & $O(1/\eps^{5})$  & Bip. & Size & Det. \\
  \hline
  \cite{Kapralov13} & $O(n\log{n})$ & $O(1/\eps^2)$ & {\large $\substack{\text{Bip.} \\ \text{(vertex arrival)}}$} & Size & Det. \\
  \hline
  \cite{Tirodkar18} & $\Ot_{\eps}(n)$ & $(1/\eps)^{O(1/\eps)}$ & Gen. & Size & Det. \\
  \hline
  \cite{GamlathKMS19} & $\Ot_{\eps}(n)$ & $(1/\eps)^{O(1/\eps^2)}$ & Gen. & Weight & Det. \\
  \hline
  \cite{AssadiLT21} & $O(n\log{n})$ & $O(1/\eps^2)$ & Bip. & Size & Det. \\
    \hline
  \cite{FischerMU22} & $\Ot(n \cdot \poly{(1/\eps)})$ & $O((1/\eps)^{19})$ & Gen. & Size & Det. \\
  \hline
  \cite{HuangS23} & $\Ot(n \cdot \poly{(1/\eps)})$ & ${(1/\eps)^{O(1)}}$  & Gen. & Weight & Det. \\
  \hline\hline
  \multicolumn{6}{|c|}{``$\eps$-efficient'' algorithms} \\
  \hline\hline
  \cite{AhnG11} & $\Ot(n \cdot \poly{(1/\eps)})$ & $O(\log{n} \cdot \poly{(1/\eps)})$ & Gen. & Weight & Det. \\
\hline
  \cite{AhnG11b} & $\Ot(n \cdot \poly{(1/\eps)})$ & $O(\log{\!(n)}/\eps)$ & Gen. & {\large $\substack{\text{Size} \\ \text{(value not edges)}}$} & Rand. \\
  \hline
  \cite{AhnG15} & $\Ot(n \cdot \poly{(1/\eps)})$ & $O(\log{\!(n)}/\eps)$ & Gen. & Weight & Rand. \\
  \hline
  \cite{AssadiJJST22} & $O(n\log{n})$ & $O(\log{\!(n)}/\eps \cdot \log{(1/\eps)})$ & Bip. & Size & Det. \\
   \hline\hline
  \multicolumn{6}{|c|}{beyond semi-streaming algorithms} \\
  \hline\hline
  \cite{AhnG15} & $\Ot(n^{1+1/p} \cdot \poly{(1/\eps)})$ & $O(p/\eps)$ & Gen. & Weight & Rand. \\
  \hline
  \cite{BehnezhadDETY17} & $O(n^{1.5}\log{\!(n)}/\eps)$ & $O(1/\eps)$ & Bip. & Size & Rand. \\
    \hline
  \cite{AssadiBKL23} & $o_{\eps}(n^2)$ & 1 & Gen. & Size & Rand. \\
   \hline\hline
  \multicolumn{6}{|c|}{multi-pass lower bounds} \\
\hline\hline  
 \cite{ChenKPSSY21} & $n^{1+\Omega(1)}$ & $\Omega({\log{(1/\eps)}}/{\log\log{n}})$ & Bip. & Size & Rand. \\
\hline
\cite{Assadi22} & $n^{1+\Omega(1)}$ & $>2$  & Bip. & Size & Rand. \\
\hline
\cite{AssadiS23} & $n^{1+\Omega(1)}$ & $\Omega(\log{(1/\eps)})$ &Bip. & Size & Rand. \\
\hline
\end{tabular}
}
\caption{\footnotesize Summary of the prior work on $(1-\eps)$-approximate streaming matchings: \emph{Bip./Gen.} refers to bipartite versus general graphs, \emph{Size/Weight} refers to cardinality versus weighted matchings, and \emph{Det./Rand.} refers to deterministic versus randomized algorithms. The space is stated in number of bits and thus $O(n\log{n})$ is space-optimal.\\ 
Due to the elegant ``weighted-to-unweighted'' reduction of~\cite{BernsteinDL21}, for \emph{bipartite} graphs, \emph{all} results for unweighted matchings can be generalized to weighted matchings by increasing the space with a factor of $(1/\eps)^{O(1/\eps)}$, 
while keeping the number of passes \emph{exactly} the same (not only asymptotically). \\ 
\emph{All}  lower bounds stated in the table hold under a combinatorial assumption regarding existence of moderately dense \emph{Ruzsa-Szemeredi} graphs (see~\cite{Assadi22,AssadiS23} for more details; see also~\cite{KonradN24}
for a two-pass unconditional lower bound.). \\
See also~\cite{GoelKK12,Kapralov13,Kapralov21} for much better approximation lower bounds for single-pass semi-streaming algorithms, which currently rule out $1/(1+\ln{2})\approx 0.59$ approximation in~\cite{Kapralov21}. 
Finally, see also~\cite{KhalilK20,KonradN21,KonradNS23} for multi-pass lower bounds that hold for different restricted families of algorithms.
}\label{tab:results}
\end{table}


\subsection*{Our Contributions}
We present a novel way of approximating matchings that is easily implementable via semi-streaming algorithms (among others). 
The high level idea---with some ambiguity left on purpose---is: 
\begin{Ourbox}
	\begin{enumerate}
		\item Sample $\Ot(n/\eps)$ edges uniformly and compute a maximum matching $M$ of the sample. 
		\item If $M$ is large enough, return $M$; otherwise, \emph{(a)} find edges that ``could have potentially led to a larger matching'',
		\emph{(b)} increase their ``importance'',  and repeat the sampling. 
	\end{enumerate}
\end{Ourbox}
This general idea of ``sample-and-solve'' is a staple in the graph streaming literature dating back, at the very least, to the \emph{filtering}~\cite{LattanziMSV11} and \emph{sample-and-prune}~\cite{KumarMVV13} techniques (both used for implementing greedy algorithms). It relies on a fundamental power of semi-streaming algorithms: once we \emph{sparsify} the input to fit into the memory, we can process it however we want. Specifically in this context, 
once the algorithm only has $\Ot(n/\eps)$ edges to work with in the sample, it can find its maximum matching or perform any ``heavy'' computation easily. 

The approach proposed above, based on adjusting importance of edges, is clearly reminiscent of the \emph{Multiplicative Weight Update (MWU)} method (see~\cite{AroraHK12}) and its application in the Plotkin-Shmoys-Tardos framework for approximating
packing/covering LPs~\cite{PlotkinST91}. There is \emph{just} one issue here: we would like this algorithm to converge in $\approx 1/\eps$ passes, while these MWU-based approaches
only guarantee $\approx 1/\eps^2$ iterations for convergence to a $(1-\eps)$-approximate solution (see, e.g.,~\cite{KleinY99}). Addressing this issue is the key difference in our work compared to prior work. 

\paragraph{Prior approaches.} The algorithms in~\cite{AhnG11b,AhnG15} also start with the same overall approach and address the above-mentioned issue through several steps: 
$(i)$ using a non-standard LP relaxation of the problem, $(ii)$ relying on the dual variables of this LP to guide step \emph{(a)} of the approach, $(iii)$ adding a penalty-term to the LP to maintain an $O(\log{\!(n)}/\eps^2)$-iterations
convergence guarantee as in the Plotkin-Shmoys-Tardos framework (to reduce the \emph{width} of the resulting problem; see~\cite{PlotkinST91,AroraHK12}), $(iv)$ ``folding'' $O(1/\eps)$ iterations of this framework in $O(1)$ passes, and $(v)$ using a notion of a 
``deferred (cut) sparsification'' (instead of sampling) that allows for implementing this last step. We refer the reader to \cite[Section 1 and 2 of arXiv version 3]{AhnG15} for more details on this algorithm; here, we only note
that the end result is a highly sophisticated algorithm that barely resembles the above strategy but can now run in $O(\log{\!(n)}/\eps)$ passes. 

The  recent algorithm of~\cite{AssadiJJST22} deviates from the above approach. Instead, it relies on more sophisticated optimizations tools in~\cite{Sherman17,JambulapatiST19,CohenST21} 
based on the classical work in~\cite{Nemirovski04,Nesterov07} that give $\approx 1/\eps$-iteration solvers directly. This approach, to a certain degree, does not take advantage of the aforementioned power of semi-streaming algorithms---meaning arbitrary computation power on sparse-enough inputs---and
is highly tailored to bipartite cardinality matching\footnote{The work of~\cite{AssadiJJST22} also have an algorithm for weighted bipartite matching but the pass-complexity depends linearly on the maximum weight of an edge (which can be polynomial in $n$) and hence is typically not  efficient.}.

\paragraph{Our approach.} Unlike prior work, we are going to revert to the original approaches of~\cite{LattanziMSV11,KumarMVV13} and 
implement the above algorithmic approach \emph{quite literally}, without relying on Plotkin-Shmoys-Tardos or similar generic frameworks. 

Concretely, our algorithm for MBM is this: in step $(a)$, find 
a \emph{minimum (bipartite) vertex cover} of the \emph{sampled} graph (relying on Konig's theorem; see~\Cref{fact:konig}); we then consider any edge of the \emph{original} graph not covered by this vertex cover
as an edge that ``could have potentially led to a larger matching''. For step \emph{(b)}, we double the importance of these edges (making
them twice as likely to be sampled next)\footnote{As a side note, this is a much more aggressive update rule compared to a typical MWU application, say in Plotkin-Shmoys-Tardos framework, which would have updated the weights
 by only a $(1 + \eps)$ factor; see also~\Cref{sec:MWU}.}. A simple analysis, relying on the duality of matchings and vertex covers,  bounds 
the number of iterations by $O(\log{\!(n)}/\eps)$, leading to the following result. 

\begin{result}\label{res:MBM}
\emph{There is a semi-streaming algorithm that given any $n$-vertex bipartite graph $G$ and a parameter $\eps \in (0,1)$, uses $O(n\log{\!(n)}/\eps)$ bits of space and $O(\log{\!(n)}/\eps)$ passes
	and with exponentially high probability\footnote{Here, and throughout, `with exponentially high probability' means with probability at least $1-\exp\paren{-\Theta(n)}$.} outputs a $(1-\eps)$-approximate maximum matching of $G$.} 
\end{result}

We believe~\Cref{res:MBM} is our main contribution as it already contains our key new ideas. Still, 
this result can be significantly generalized to cover maximum weight matchings in general graphs, with minimal additional effort and by relying on standard tools from matching theory. 

\begin{result}\label{res:MWM}
\emph{There is a semi-streaming algorithm that given any $n$-vertex general graph $G$ with integer edge weights $w: E \rightarrow \IN$ and a parameter $\eps \in (0,1)$, uses $O(n\log^2{\!(n)}/\eps)$ bits of space and $O(\log{\!(n)}/\eps)$ passes
	and with exponentially high probability outputs a $(1-\eps)$-approximate maximum weight matching of $G$.} 
\end{result}
\Cref{res:MWM} is now providing a considerably simpler algorithm compared to the main results of~\cite{AhnG15} (also with a better space-dependence by $\poly{(\log{n},1/\eps)}$ factors). We hope this can pave the way for both future theoretical improvements  
and more practical algorithms for this fundamental problem\footnote{It is worth mentioning that the generic approaches of~\cite{LattanziMSV11,KumarMVV13} that are most similar to our algorithms have 
indeed led to highly practical algorithms; see the empirical evaluations in the aforementioned papers.}. 

Finally, we note that in the interest of keeping the main ideas in this paper as transparent as possible, we have opted to focus only on the most important aspects of our algorithms 
in our main arguments. Then, in~\Cref{sec:extensions}, we point out several standard and not-so-standard extensions of our algorithms such as improved runtime, $O(1/\eps)$-pass algorithms in $n^{1+\Omega(1)}$ space, derandomization,  and others.


\newcommand{\qe}[1]{\ensuremath{q^{(#1)}}}
\newcommand{\Qe}[1]{\ensuremath{Q^{(#1)}}}

\newcommand{\pe}[1]{\ensuremath{p^{(#1)}}}

\newcommand{\UU}{\ensuremath{\mathcal{U}}}

\section{Preliminaries}\label{sec:prelim}

\paragraph{Notation.} For any graph $G=(V,E)$, we use $n$ to denote the number of vertices and $m$ as the number of edges. 
We further use $\mu(G)$ to denote the maximum matching size in $G$ and $\mu(G,w)$ to denote the maximum matching weight in $G$ under edge weights $w: E \rightarrow \IN$.

In the following, we review some basic facts from matching theory in bipartite and general (weighted) graphs, separately. We note that our~\Cref{res:MBM} and the arguments in~\Cref{sec:bipartite} only 
rely on the basics for bipartite graphs, and thus a reader solely interested in that part of the paper, can safely skip~\Cref{sec:basics-general} below. 

\subsection{Basics of Matching Theory in Bipartite Graphs}\label{sec:basics-bipartite}

Let $G=(L,R,E)$ be a bipartite graph. Recall the following definitions: 
\begin{itemize}
\item A matching $M$ is a set of vertex-disjoint edges in $E$ and a fractional matching $x \in [0,1]^E$ is an assignment to the edges 
so that for every vertex $v \in L \cup R$, we have $\sum_{e \ni v} x_e \leq 1$. We denote the size of a fractional matching $x$ by $\card{x} := \sum_{e \in E} x_e$. 
\item Similarly, a vertex cover $U$ is a set of vertices incident on every edge and a fractional vertex cover $y \in [0,1]^V$ is an assignment to the vertices so that for every edge $e=(u,v) \in E$, $y_u + y_v \geq 1$. 
We denote the size of a fractional vertex cover $y$ by $\card{y} := \sum_{v \in L \cup R} y_v$. 
\end{itemize}
(We only use fractional matchings and vertex covers  in the analysis). 

Konig's theorem~\cite{Konig1931} establishes duality of maximum (fractional) matchings and minimum (fractional) vertex covers  in bipartite graphs. 
\begin{fact}[{Konig's theorem}]\label{fact:konig}
In bipartite graphs, the sizes of maximum matchings, fractional matchings, minimum vertex covers, and fractional vertex covers are all the same. 
\end{fact}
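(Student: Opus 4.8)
Write $\mu$, $\mu_f$, $\tau$, $\tau_f$ for the sizes of a maximum matching, a maximum fractional matching, a minimum vertex cover, and a minimum fractional vertex cover of $G$. My plan is to close the cycle of inequalities $\mu \le \mu_f \le \tau_f \le \tau \le \mu$. Three of the four steps are immediate: a matching is a $\{0,1\}$-valued fractional matching, giving $\mu\le\mu_f$; an integral vertex cover is a fractional one, giving $\tau_f\le\tau$; and for any fractional matching $x$ and fractional vertex cover $y$ we have $|x|=\sum_{e\in E} x_e\le\sum_{e=(u,v)\in E}x_e(y_u+y_v)=\sum_{v\in L\cup R} y_v\sum_{e\ni v}x_e\le\sum_{v\in L\cup R} y_v=|y|$, using $y_u+y_v\geq 1$ and $\sum_{e\ni v}x_e\le 1$, so optimizing over $x$ and $y$ yields $\mu_f\le\tau_f$. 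Thus the whole statement reduces to the single, genuinely combinatorial inequality $\tau\le\mu$: given a maximum matching, exhibit a vertex cover of the same size.

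For that I would run the standard alternating-path argument. Fix a maximum matching $M$ in $G=(L,R,E)$, let $U\subseteq L$ be the vertices of $L$ unmatched by $M$, and let $Z$ be the set of all vertices reachable from $U$ by $M$-alternating paths (paths alternating between non-matching and matching edges, starting with a non-matching edge out of $U$). I claim that $C:=(L\setminus Z)\cup(R\cap Z)$ is a vertex cover of $G$ with $|C|=|M|$, which gives $\tau\le|C|=|M|=\mu$ and closes the cycle.

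The cover property I would prove by contradiction: an edge $(a,b)$ with $a\in L$, $b\in R$ escapes $C$ only if $a\in Z$ and $b\notin Z$; but then the alternating path reaching $a$ can be extended by the edge $(a,b)$ — if $a\in U$ this edge is non-matching since $a$ is unmatched, and if $a\notin U$ then $a$ was reached via its matching edge, so $(a,b)$ is non-matching unless $b$ is $a$'s partner, in which case $b\in Z$ already — forcing $b\in Z$, a contradiction. For the size I would check two things: (i) every vertex of $C$ is $M$-matched, because a vertex of $L\setminus Z$ cannot lie in $U\subseteq Z$, and an unmatched vertex of $R\cap Z$ would be the endpoint of an $M$-augmenting path, contradicting maximality of $M$; and (ii) no edge of $M$ has both endpoints in $C$, since for $(a,b)\in M$ with $b\in R\cap Z$ one extends the alternating path reaching $b$ along the matching edge $(b,a)$ to deduce $a\in Z$, whence $a\notin L\setminus Z$ and, being in $L$, $a\notin C$. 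Items (i) and (ii) say exactly that $C$ picks out one endpoint from each edge of $M$ and nothing else, so $|C|=|M|$.

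The only real content here is the direction $\tau\le\mu$; the rest is bookkeeping. The point to handle with care is the structural fact that an $M$-alternating path from $U$ reaches a vertex of $L$ only via a matching edge and a vertex of $R$ only via a non-matching edge — both the cover property and the size count quietly lean on this. As an alternative route one could instead invoke total unimodularity of the bipartite incidence matrix: both the matching LP and the vertex-cover LP then have integral optima, so $\mu=\mu_f$ and $\tau=\tau_f$, and strong LP duality gives $\mu_f=\tau_f$; but this imports heavier machinery than the alternating-path proof actually needs.
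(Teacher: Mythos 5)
Your proof is correct. Note, though, that the paper does not prove König's theorem at all: it is recorded as \Cref{fact:konig}, cited to~\cite{Konig1931}, with the reader pointed to~\cite{LovaszP09} for details, and is then used as a black box in the primal--dual analysis of~\Cref{alg:bipartite}. So there is no in-paper argument to compare against. What you wrote is the standard textbook proof: the three easy inequalities $\mu\le\mu_f$, $\tau_f\le\tau$, and the weak-duality calculation $\mu_f\le\tau_f$, with the real content concentrated in $\tau\le\mu$ via the alternating-path construction of a cover $C=(L\setminus Z)\cup(R\cap Z)$. Your handling of the two delicate points --- that the cover property fails only if an alternating path could be extended, and that $C$ injects into $M$ because every vertex of $C$ is matched and no matching edge has both ends in $C$ --- is correct, and correctly leans on the parity fact that alternating paths from $U$ enter $L$-vertices only along matching edges and $R$-vertices only along non-matching edges. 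The alternative route you sketch (total unimodularity plus strong LP duality) is also valid and is closer in spirit to how the paper actually \emph{uses} the fact, since the algorithm's analysis is explicitly primal--dual; but as you say, it imports more machinery than the combinatorial proof needs.
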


See the excellent book of Lov\'asz and Plummer~\cite{LovaszP09} on matching theory for more details. 

\subsection{Basics of Matching Theory in General (Weighted) Graphs}\label{sec:basics-general}

Let $G=(V,E)$ be a (general) graph with \emph{integer} edge weights $w: E \rightarrow \IN$. The duality between matchings and vertex covers no longer holds in general graphs, nor 
the equivalence of fractional matchings and integral ones (the way defined previously). Thus, one needs a more general definition. In the following, we use $odd(V)$ to denote
the collection of all sets of vertices in $V$ with \emph{odd} cardinality. For a set $S \subseteq V$, we use $E[S]$ to denote the edges with \emph{both} endpoints in $S$. 

\begin{itemize}
\item As before, a matching $M$ is a set of vertex-disjoint edges in $E$ and $w(M)$ is its weight. We define a (general) fractional matching $x \in [0,1]^E$ as an assignment to the edges satisfying: 
\begin{alignat*}{2}
	&\text{for all $v \in V$:} ~~\sum_{e \ni v} x_e \leq 1 \quad \text{and for all $S \in odd(V)$:} ~~ \sum_{e \in E[S]} x_e \leq \frac{\card{S}-1}{2}. 
\end{alignat*}
We define the weight of a fractional matching as $\sum_{e \in E} w(e) \cdot x_e$. 

\item The dual to maximum fractional matchings is the following \emph{``odd-set cover''} problem. We define a fractional odd-set cover as a pair of assignments $y \in \IR^V$ and $z \in \IR^{odd(V)}$ to vertices and odd-sets in $G$ 
satisfying the following: 
\begin{align}
	\text{for all $e=(u,v) \in E$:} \quad y_u + y_v + \hspace{-10pt}\sum_{\substack{S \in odd(V) \\ e \in E[S]}} \hspace{-10pt} z_S \geq w(e). \label{eq:fractional-odd-set-cover}
\end{align}
The value of a fractional odd-set cover is then
\[
	\card{(y,z)} := \sum_{v \in V} y_v + \sum_{S \in odd(V)} \frac{\card{S}-1}{2} \cdot z_S. 
\]
\end{itemize}

We have the following result---similar in spirit to~\Cref{fact:konig} for bipartite graphs---on the duality of maximum fractional matchings and odd-set covers. 

\begin{fact}[{Edmond's matching polytope theorem~\cite{Edmonds65}}]\label{fact:edmonds}
In any graph, the weights of maximum weight matchings and fractional matchings, and the sizes of minimum odd-set covers are the same. 
\end{fact}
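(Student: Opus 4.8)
The plan is to split the asserted chain of equalities into two independent parts and then compose them. Write $\mu(G,w)$ for the maximum matching weight, as in the notation; let $\nu_f$ denote the maximum weight of a (general) fractional matching; and let $\tau_f$ denote the minimum value $\card{(y,z)}$ of a fractional odd-set cover (with $y,z\geq 0$). Since every matching is in particular a fractional matching, $\mu(G,w)\leq\nu_f$ is immediate, so it suffices to prove (i) $\nu_f=\tau_f$ by linear-programming duality, and (ii) $\nu_f\leq\mu(G,w)$ by showing that the polytope of fractional matchings has only integral vertices. Combining these, $\mu(G,w)\leq\nu_f=\tau_f$ and $\nu_f\leq\mu(G,w)$, so all three quantities coincide.

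For (i): the maximum fractional matching weight is the value of the linear program
\[
	\nu_f\;=\;\max\Bigl\{\textstyle\sum_{e\in E}w(e)\,x_e\;:\;x\geq 0,\ \ \sum_{e\ni v}x_e\leq 1\ \ \forall v\in V,\ \ \sum_{e\in E[S]}x_e\leq\tfrac{\card{S}-1}{2}\ \ \forall S\in odd(V)\Bigr\}.
\]
This LP is feasible (take $x=0$) and bounded above (each $x_e\leq 1$, so the objective is at most $\sum_e w(e)$), hence strong duality applies. Its dual has one variable $y_v\geq 0$ per vertex $v$ and one variable $z_S\geq 0$ per odd set $S$, minimizes $\sum_v y_v+\sum_S\tfrac{\card{S}-1}{2}z_S=\card{(y,z)}$, and has exactly one constraint per edge $e=(u,v)$, namely $y_u+y_v+\sum_{S\in odd(V):\,e\in E[S]}z_S\geq w(e)$ --- which is precisely the fractional odd-set cover of \Cref{eq:fractional-odd-set-cover} together with its value $\card{(y,z)}$. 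Hence $\nu_f=\tau_f$.

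For (ii): since the objective above is linear, its maximum is attained at a vertex of the feasible polytope $P$, so it suffices to show that every vertex of $P$ is the indicator vector $\mathbf{1}_M$ of a matching $M$ (this is Edmonds' matching polytope statement). I would argue by induction on $\card{V}+\card{E}$: let $G$ be a smallest counterexample and $x^\ast$ a vertex of $P$ that is not a matching indicator. Two routine reductions force $0<x^\ast_e<1$ for every edge $e$: if $x^\ast_e=0$ delete $e$, and if $x^\ast_e=1$ the degree constraints at the endpoints of $e$ zero out all their other coordinates, so those endpoints and $e$ may be deleted --- each contradicting the minimality of $G$. Consequently no nonnegativity constraint is tight at $x^\ast$, so, $x^\ast$ being a vertex in $\IR^E$, at least $\card{E}$ linearly independent degree and odd-set inequalities are tight there. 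If some odd-set inequality is tight for an $S$ with $3\leq\card{S}\leq\card{V}-1$, I contract $S$ to one pseudo-vertex, forming $G/S$, and separately keep $G[S]$; both are strictly smaller, the tightness $\sum_{e\in E[S]}x^\ast_e=\tfrac{\card{S}-1}{2}$ makes the restrictions of $x^\ast$ to $G[S]$ and to $G/S$ feasible for the respective polytopes, with the $G[S]$-restriction of size $\tfrac{\card{S}-1}{2}$ and hence (by induction) a convex combination of near-perfect matchings of $G[S]$, while the $G/S$-restriction is likewise a convex combination of matchings; gluing these decompositions along $S$ yields a convex combination of matchings of $G$ equal to $x^\ast$, a contradiction. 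Otherwise, the only tight inequalities beyond the trivial $\card{S}\leq 1$ ones are degree inequalities (a tight $S=V$ inequality, if present, is redundant, its coefficient vector being $\tfrac12$ the sum of the degree-inequality coefficient vectors), so $x^\ast$ is pinned down by tight degree inequalities alone on a subgraph where, by the first reduction, every vertex has degree at least $2$; the classical description of the vertices of the degree-constrained polytope (Balinski) then forces $x^\ast$ to be half-integral with support a vertex-disjoint union of odd cycles, but any such cycle $C$ has $\sum_{e\in E[C]}x^\ast_e=\card{C}/2>(\card{C}-1)/2$, violating the odd-set inequality for $S=V(C)$ --- again a contradiction. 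This proves (ii), and hence the theorem.

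I expect the shrinking step in (ii) to be the main obstacle to make fully rigorous. One must verify that contraction carries the odd-set (and degree) constraints of $G$ faithfully onto those of $G/S$ --- this is exactly where $S$-tightness is used, both to bound the degree of the pseudo-vertex by $1$ and to bound each contracted odd set --- and then that the inductive decomposition of the $G[S]$-restriction (into near-perfect matchings, classified by which vertex of $S$ they miss) and that of the $G/S$-restriction (classified by which vertex of $S$ the pseudo-vertex's matched edge enters, if any) can be refined to a common set of mixing weights and paired up consistently, so that the glued matchings reproduce $x^\ast$; the degree balance needed for this pairing follows from the vertex constraints $\sum_{e\ni v}x^\ast_e\leq 1$ together with $S$-tightness. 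Everything else --- the LP setup in (i) and the elementary reductions opening (ii) --- is routine.
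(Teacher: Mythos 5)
The paper treats this as a black-box fact cited from \cite{Edmonds65}, deferring to \cite{LovaszP09} for proofs; there is no proof in the paper to compare against. Your argument is, on its own, a correct rendering of the classical textbook route: (i) LP strong duality identifies the maximum fractional matching value $\nu_f$ with the minimum odd-set cover value $\tau_f$ (the dual constraints are exactly \Cref{eq:fractional-odd-set-cover}, once one notes the implicit nonnegativity $y,z\geq 0$ that the paper leaves tacit); and (ii) integrality of the fractional matching polytope via a minimal-counterexample contraction argument. The reductions forcing $0<x^\ast_e<1$, the case split on whether some odd set with $3\leq\card{S}\leq\card{V}-1$ is tight, the shrinking, and the odd-cycle contradiction in the residual case are all standard and right. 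The delicate gluing you flag does in fact go through: writing $\alpha_v = 1-\sum_{e\ni v,\,e\in E[S]}x^\ast_e$ for the mass of near-perfect matchings of $G[S]$ missing $v$ in the inductive decomposition (this is forced by $x^\ast$, not a modelling choice), and $\beta_v=\sum_{e=(v,u),\,u\notin S}x^\ast_e$ for the mass of $G/S$-matchings whose pseudo-vertex edge originates at $v$, the degree constraint at $v$ together with $S$-tightness yields $\beta_v\leq\alpha_v$, which is exactly the compatibility condition needed to interleave the two convex decompositions and reconstruct $x^\ast$; your closing remark already identifies this. The one point worth spelling out more explicitly: in the residual (degree-only) case, you must observe that $x^\ast$, being pinned by a full-rank system of tight degree equalities drawn from the degree polytope's own constraint set, is a vertex of that degree polytope as well --- that is what licenses the appeal to Balinski's half-integral odd-cycle characterization before invoking the odd-set inequality to derive the contradiction.
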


Finally, we shall also use the following result on the structure of optimal fractional odd-set covers, which is crucial for the probabilistic analysis of our algorithm. 
Recall that a family of sets $\mathcal{F}$ is \emph{laminar} iff for all $A,B \in \mathcal{F}$, either $A \cap B = \emptyset$ or $A \subseteq B$ or $B \subseteq A$.

\begin{fact}[{Cunningham-Marsh theorem~\cite{CunninghamM78}}]\label{fact:cunningham-marsh}
	In any graph $G$ with integer edge-weights, there is an optimal fractional odd-set cover $y \in \IR^V$ and $z \in \IR^{odd(V)}$ such that $(i)$ both $y$ and $z$ only take integer values, 
	and $(ii)$  $z_S > 0$ only for a family of sets in $odd(V)$ that form a laminar family. 
\end{fact}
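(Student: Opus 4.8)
This is a classical theorem, so I will sketch the standard proof rather than invent a new one. First, the ``easy direction'' (weak duality) is immediate: for any matching $M$ and any \emph{integral} feasible pair $(y,z)\geq 0$,
\[
  w(M) \;=\; \sum_{e=(u,v)\in M} w(e) \;\leq\; \sum_{e=(u,v)\in M}\Big( y_u+y_v+\!\!\sum_{\substack{S\in odd(V)\\ e\in E[S]}}\!\! z_S \Big) \;\leq\; \sum_{v\in V} y_v + \sum_{S} z_S\,\big|M\cap E[S]\big| \;\leq\; \card{(y,z)},
\]
since $M$ meets each vertex at most once and packs at most $\tfrac{\card{S}-1}{2}$ edges inside an odd set $S$. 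Hence, given \Cref{fact:edmonds} (which already identifies $\mu(G,w)$ with the minimum value of a real-valued fractional odd-set cover), it suffices to \emph{construct} an integral feasible $(y,z)$ with laminar $z$-support and $\card{(y,z)}=\mu(G,w)$. I would do this by induction on $W:=\sum_{e}w(e)$; zero-weight edges can be dropped throughout (their constraint \eqref{eq:fractional-odd-set-cover} is vacuous once $y,z\geq 0$), singletons $S$ need never carry $z$-weight (they contribute $0$ to the value and nothing to \eqref{eq:fractional-odd-set-cover}, so their weight moves into $y$), and connected components are handled separately; so assume $G$ is connected with $W\geq 1$ and write $\mu:=\mu(G,w)$.

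\textbf{Easy case: an ``essential'' vertex exists.} Suppose some vertex $v$ is covered by every maximum-weight matching; then it is covered by a \emph{positive}-weight matching edge in each (else dropping a zero-weight matching edge at $v$ gives a maximum-weight matching missing $v$). Let $w'$ equal $w$ except $w'(e)=w(e)-1$ on the edges at $v$. One checks $\mu(G,w')=\mu-1$ and $\sum_e w'(e)<W$. By induction $(G,w')$ has an integral laminar witness $(y',z')$ with $\card{(y',z')}=\mu-1$; return $(y,z)$ with $y_v:=y'_v+1$, $y_u:=y'_u$ for $u\neq v$, and $z:=z'$. The constraints at $v$ regain the lost $+1$, all other constraints and the $z$-support are unchanged, and $\card{(y,z)}=\mu$.

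\textbf{Hard case: no essential vertex.} When every vertex is missed by some maximum-weight matching, I would invoke the Gallai--Edmonds structure theorem — applied to the subgraph spanned by the edges that occur in \emph{some} maximum-weight matching — to extract an odd ``blossom'' $S$ with $\card{S}\geq 3$ and a positive-weight edge inside, such that \emph{every} maximum-weight matching packs $S$ near-perfectly, i.e.\ $\big|M\cap E[S]\big|=\tfrac{\card{S}-1}{2}$. Given such $S$, let $w'$ equal $w$ except $w'(e)=w(e)-1$ on $E[S]$; the near-perfect-packing property gives $\mu(G,w')=\mu-\tfrac{\card{S}-1}{2}$, and $\sum_e w'(e)<W$. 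By induction get an integral laminar witness $(y',z')$ with $\card{(y',z')}=\mu-\tfrac{\card{S}-1}{2}$, and return it with $z_S$ increased by $1$; the constraints inside $E[S]$ regain their $+1$, $\card{(y,z)}=\mu$, and — choosing $S$ inclusion-maximal among such blossoms — the $z$-support remains laminar.

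\textbf{Main obstacle.} The arithmetic in both inductive steps is routine; the real content is the structural claim in the hard case — that the absence of an essential vertex forces a maximum-weight-matching blossom $S$ along which all weights can be shaved by $1$ so that $\mu(G,w)$ drops by exactly $\tfrac{\card{S}-1}{2}$, \emph{and} that such $S$ can be chosen compatibly with the laminar family produced by the recursion. Making this precise is essentially re-deriving the relevant portion of Gallai--Edmonds / Edmonds--Gallai theory, and that is where the work lies. (An alternative is to take a real-valued optimum from \Cref{fact:edmonds} and laminarize its support by uncrossing $S,T\mapsto S\cap T,\,S\cup T$; but the parity of odd sets is not preserved by intersections and unions, so this route is surprisingly delicate and, moreover, does not by itself deliver integrality. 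One can also simply read the statement off the correctness of Edmonds' weighted blossom algorithm, which throughout its execution maintains exactly such an integral dual $(y,z)$ with $z$ supported on a nested family of blossoms and halts with an optimal primal--dual pair — but then proving that correctness is the obstacle instead.)
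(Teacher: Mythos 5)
The paper does not prove this Fact; it is stated as a citation (\cite{CunninghamM78}, with a pointer to \cite{LovaszP09} for proofs), so there is no ``paper's proof'' to compare against. What you have written is a from-scratch proof sketch of a classical theorem, and I will evaluate it as such.

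Your easy case (an essential vertex $v$) is correct, and the arithmetic ($\mu(G,w')=\mu-1$ after shaving all weights at $v$ by one, using essentiality for the converse direction) checks out once zero-weight edges are dropped. The overall inductive skeleton is the standard one. You are also right that pure uncrossing is unattractive here (though it is not quite as hopeless as you suggest: when $S,T$ cross one of $\card{S\cap T}$, $\card{S\setminus T}$ is odd, so one may uncross either to $\{S\cap T,\,S\cup T\}$ or to $\{S\setminus T,\,T\setminus S\}$ depending on parity; but, as you say, that yields laminarity, not integrality).

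The genuine gap is in the hard case, and it is not only where you flag it. You write that ``the near-perfect-packing property gives $\mu(G,w')=\mu-\frac{\card{S}-1}{2}$'' as if this were routine arithmetic, with the structural existence of $S$ being the only obstacle. But near-perfect packing by all $w$-maximum matchings does \emph{not} by itself imply the claimed drop in $\mu$. Concretely, for a matching $M'$ one has $w'(M')=w(M')-\card{M'\cap E[S]}$, and what is needed is $w(M')+\bigl(\frac{\card{S}-1}{2}-\card{M'\cap E[S]}\bigr)\leq\mu$ for \emph{every} matching $M'$, not just for $w$-maximum ones. A sub-optimal $M'$ that avoids $E[S]$ entirely and has $w(M')=\mu-1$ would violate $\mu(G,w')\leq\mu-\frac{\card{S}-1}{2}$ as soon as $\card{S}\geq 5$. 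Ruling such $M'$ out requires more than ``every maximum-weight matching packs $S$ near-perfectly''; it needs a sharper choice of $S$ (this is exactly where complementary slackness or the weighted analogue of Gallai--Edmonds structure enters, and where the correct choice of $S$ is entangled with the integrality you are trying to prove — so there is a circularity to dodge). Similarly, ``choosing $S$ inclusion-maximal'' is asserted to preserve laminarity with the recursively produced $z'$-support, but the recursion runs on $(G,w')$, not on $(G,w)$, so there is no a priori reason its tight odd sets nest with $S$; this also needs an argument. In short: the skeleton is the right shape, and you are candid that the hard case is where the work lies, but the specific reduction you wrote down (near-perfect packing $\Rightarrow$ $\mu$ drops by $\frac{\card{S}-1}{2}$) is itself a gap, not routine bookkeeping.
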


Again, see~\cite{LovaszP09} for more details and proofs of these facts. 

\section{Maximum Cardinality Bipartite Matching}\label{sec:bipartite}

We prove~\Cref{res:MBM} in this section.  We start with presenting our new algorithm in a generic and model-independent way, and the show how it can be implemented in the semi-streaming model. 

\begin{Algorithm}\label{alg:bipartite}
A sample-and-solve approximation algorithm for maximum bipartite matching. 

\begin{itemize}
	\item \textbf{Input}: A bipartite graph $G=(L,R,E)$ and parameter $\eps \in (0,1)$; 
	\item \textbf{Output}: A $(1-\eps)$-approximate maximum matching in $G$. 
\end{itemize}
	\begin{enumerate}
	\item Start with \textbf{importance}\footnote{While it is more common to refer to this concept as `weight' in the context of MWU, given that we will eventually work with weighted matchings, we use
	the term `importance' to avoid ambiguity.} $\qe{1}_e = 1$ for every edge $e \in E$ and define $\Qe{1} := \sum_{e \in E} \qe{1}_e$. 

	\item For $r=1$ to $R := \dfrac{4}{\eps} \cdot \log{m}$ iterations: 
	\begin{enumerate}[leftmargin=5pt]
	\item Sample each edge $e \in E$ independently at random with probability:
	\begin{align}
		p^{(r)}_e := {\frac{2n}{\eps} \cdot \frac{\qe{r}_e}{\Qe{r}}}. \label{eq:pe-bipartite}
	\end{align}
	\item\label{line:2} Compute a maximum matching $M^{(r)}$ and a minimum vertex cover $U^{(r)}$ of the sample. 
	\item For any edge $e \in E$ \underline{not} covered by $U^{(r)}$, update: 
	\begin{align}
		\qe{r+1}_e = \qe{r}_e \cdot 2. \label{eq:qe-bipartite}
	\end{align}
	Then, let $\Qe{r+1} = \sum_{e \in E} \qe{r+1}_e$. 
	\end{enumerate}
	\item Return the largest of matchings $M^{(r)}$ for $r \in [R]$. 

	\end{enumerate}
\end{Algorithm}

\begin{theorem}\label{thm:alg-bipartite}
	For any bipartite graph $G=(L,R,E)$ and parameter $\eps \in (0,1)$, \Cref{alg:bipartite} outputs a matching of size at least $(1-\eps) \cdot \mu(G)$ in $G$ with exponentially high probability. 
\end{theorem}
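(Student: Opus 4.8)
\noindent\emph{Proof plan.} The plan is to suppose the algorithm fails --- that is, $\card{M^{(r)}}<(1-\eps)\mu(G)$ for \emph{every} $r\in[R]$ --- and to show this event has only exponentially small probability. Fix a maximum matching $M^{*}$ with $\card{M^{*}}=\mu(G)$. Two purely deterministic facts drive the argument. First, an edge of $G$ left uncovered by $U^{(r)}$ cannot have been sampled in iteration $r$ (otherwise $U^{(r)}$ would not cover the sampled graph), so the set $F^{(r)}$ of edges whose importance is doubled in iteration $r$ equals $\set{e\in E:\text{both endpoints of }e\text{ lie outside }U^{(r)}}$. Second, by K\"onig's theorem (\Cref{fact:konig}) $\card{U^{(r)}}=\card{M^{(r)}}$, and since $M^{*}$ is a matching the cover $U^{(r)}$ meets at most $\card{U^{(r)}}$ of its edges; hence at least $\mu(G)-\card{M^{(r)}}>\eps\mu(G)$ edges of $M^{*}$ lie in $F^{(r)}$ and get doubled. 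Since importances never decrease, over all $R$ iterations the total number of doublings applied to the $\mu(G)$ edges of $M^{*}$ exceeds $\eps\mu(G)R$, so by pigeonhole some $e^{*}\in M^{*}$ is doubled more than $\eps R$ times, and therefore $\Qe{R+1}\geq\qe{R+1}_{e^{*}}\geq 2^{\eps R}$.

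Next I would bound $\Qe{R+1}$ from above, which is the only place randomness enters. Let $\event$ be the event that in every iteration the doubled edges carry at most an $\tfrac\eps2$-fraction of the total importance, i.e.\ $\sum_{e\in F^{(r)}}\qe{r}_e\le\tfrac\eps2\Qe{r}$. On $\event$ we have $\Qe{r+1}=\Qe{r}+\sum_{e\in F^{(r)}}\qe{r}_e\le(1+\tfrac\eps2)\Qe{r}$, and since $\Qe{1}=m$ and $R=\tfrac4\eps\log m$ (logarithm to base $2$), this gives $\Qe{R+1}\le m(1+\tfrac\eps2)^{R}\le m\cdot e^{2\log m}=m^{1+2/\ln 2}<m^{4}=2^{\eps R}$, contradicting the lower bound of the previous paragraph once $m\geq 2$ (the case $m\le 1$ being trivial). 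Hence, on $\event$, some iteration $r$ has $\card{M^{(r)}}\geq(1-\eps)\mu(G)$ and the algorithm returns a matching at least this large; it therefore suffices to prove $\Pr[\event]\geq 1-\exp(-\Theta(n))$.

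This last estimate is the step I expect to be the main obstacle. I would union bound over the $R$ iterations and, inside a fixed iteration, condition on the history (so that $\qe{r}$ and $\Qe{r}$ are determined) and then union bound over all $2^{n}$ vertex subsets $U\subseteq V$. The key point is that $U$ is a vertex cover of the sampled graph if and only if no edge of $G$ with both endpoints outside $U$ is sampled --- an event of probability $\prod(1-\pe{r}_e)\le\exp(-\sum\pe{r}_e)$ over those edges --- and by \eqref{eq:pe-bipartite}, if those edges carry importance exceeding $\tfrac\eps2\Qe{r}$, then $\sum\pe{r}_e>\tfrac{2n}\eps\cdot\tfrac\eps2=n$, so the probability is below $e^{-n}$. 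Since $F^{(r)}$ is exactly the uncovered-edge set of $U^{(r)}$, which is in particular a vertex cover of the sampled graph, a union bound gives $\Pr\bigl[\sum_{e\in F^{(r)}}\qe{r}_e>\tfrac\eps2\Qe{r}\bigr]\le 2^{n}e^{-n}=(2/e)^{n}$, and summing over iterations $\Pr[\neg\event]\le R(2/e)^{n}=\exp(-\Theta(n))$ whenever $R=\exp(o(n))$ (in particular for $\eps\geq 1/\poly(n)$, the regime of interest for semi-streaming). The argument is quite tight: the sampling scale $2n/\eps$ in \eqref{eq:pe-bipartite} and the threshold $\tfrac\eps2$ are chosen precisely so that the per-subset bound $e^{-n}$ dominates the $2^{n}$ union bound while the resulting growth rate $1+\tfrac\eps2$ of $\Qe{r}$ stays small enough for the contradiction to go through; everything else is routine bookkeeping with the scalar potential $\Qe{r}$.
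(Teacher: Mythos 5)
Your proposal is correct and takes essentially the same route as the paper: you split the argument into the same two potential-function bounds (an exponentially-high-probability upper bound $\Qe{R+1}\le m(1+\eps/2)^R$ via a union bound over $2^n$ candidate covers, and a deterministic lower bound $\Qe{R+1}\ge 2^{\eps R}$ via K\"onig plus averaging over $M^*$), and derive the same contradiction; you just inline the two lemmas and use $1+x\le e^x$ in place of the paper's $1+x<2^{3x/2}$, and you explicitly flag the (implicit in the paper) requirement $R=\exp(o(n))$ when summing the union bound over iterations.
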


The proof follows the recipe of the MWU analysis (e.g., in~\cite{AroraHK12}), using $\Qe{r}$ as a potential function. The first lemma upper bounds $\Qe{R+1}$ at the end of the algorithm. 
\begin{lemma}\label{lem:Qr-small-bipartite}
	With exponentially high probability, $\Qe{R+1} \leq (1+\eps/2)^{R} \cdot m$.  
\end{lemma}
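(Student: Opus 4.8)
Let me lay out the plan. I would use $\Qe{r}$ as a potential function and show that, with exponentially high probability, each iteration multiplies it by at most $(1+\eps/2)$; telescoping over the $R$ iterations together with $\Qe{1}=m$ then gives the claim. Fix an iteration $r\in[R]$ and condition on all the randomness of iterations $1,\dots,r-1$, so that $\qe{r}$ and $\Qe{r}$ become fixed quantities. Since the update only doubles the importance of edges not covered by $U^{(r)}$,
\[
\Qe{r+1} \;=\; \Qe{r} \;+\; \sum_{\substack{e\in E\\ e\cap U^{(r)}=\emptyset}} \qe{r}_e ,
\]
so it suffices to show that, with probability $1-\exp(-\Theta(n))$, the set $U^{(r)}$ produced in iteration $r$ satisfies $\sum_{e\cap U^{(r)}=\emptyset}\qe{r}_e \le (\eps/2)\,\Qe{r}$.

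Call a set $U\subseteq L\cup R$ \emph{heavy} if $\sum_{e\cap U=\emptyset}\qe{r}_e > (\eps/2)\,\Qe{r}$; since there are only $2^{n}$ subsets of $L\cup R$, there are at most $2^{n}$ heavy sets. The key observation is that $U^{(r)}$ is a vertex cover of the \emph{sampled} subgraph, so no sampled edge can avoid $U^{(r)}$; hence the event $\{U^{(r)}=U\}$ is contained in the event that no sampled edge avoids $U$. For a fixed heavy $U$, the choice of $\pe{r}_e$ in \cref{eq:pe-bipartite} gives
\[
\sum_{e\cap U=\emptyset}\pe{r}_e \;=\; \frac{2n}{\eps}\cdot\frac{\sum_{e\cap U=\emptyset}\qe{r}_e}{\Qe{r}} \;>\; \frac{2n}{\eps}\cdot\frac{\eps}{2}\;=\;n,
\]
and, by independence of the samples and $1-x\le e^{-x}$,
\[
\Pr[\,\text{no sampled edge avoids }U\,] \;\le\; \prod_{e\cap U=\emptyset}\bigl(1-\pe{r}_e\bigr) \;\le\; e^{-\sum_{e\cap U=\emptyset}\pe{r}_e} \;<\; e^{-n}
\]
(the degenerate case $\pe{r}_e\geq 1$ only makes the left-hand side $0$). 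A union bound over the at most $2^{n}$ heavy sets then yields $\Pr[\,U^{(r)}\text{ is heavy}\,] \le 2^{n}e^{-n} = (2/e)^{n} = \exp(-\Theta(n))$, since $e>2$. When $U^{(r)}$ is not heavy, the displayed identity gives $\Qe{r+1}\le(1+\eps/2)\,\Qe{r}$.

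It remains to assemble the pieces: taking a union bound over the $r\in[R]$ iterations, with probability at least $1-R\,(2/e)^{n}$ we have $\Qe{r+1}\le(1+\eps/2)\,\Qe{r}$ for every $r\in[R]$, whence $\Qe{R+1}\le(1+\eps/2)^{R}\,\Qe{1}=(1+\eps/2)^{R}m$; since $R=\tfrac{4}{\eps}\log m$ is subexponential in $n$ in any regime where the claimed space bound is meaningful, $R\,(2/e)^{n}=\exp(-\Theta(n))$ as well. I do not expect a genuine obstacle here; the only point requiring care is the union bound, where one must beat the $2^{n}$ candidate sets $U$ by the per-set failure probability $e^{-n}$. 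This works precisely because $e>2$, i.e.\ because the sampling rate $2n/\eps$ is chosen large enough that a heavy set has expected more than $n$ sampled edges in its complement — this is the one quantitative place where the factor $2n/\eps$ (rather than, say, $n/\eps$) is used.
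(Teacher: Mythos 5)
Your proposal is correct and follows essentially the same route as the paper: fix a candidate cover, show a ``heavy'' one is exponentially unlikely to cover all sampled edges, union-bound over the $2^n$ candidates and then over the $R$ iterations, and telescope. Your explicit handling of the degenerate case $\pe{r}_e \ge 1$ and the remark that the union bound works precisely because $e>2$ are small refinements, but the core argument is the one the paper uses.
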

\begin{proof}
	Fix any iteration $r \in [R]$. Let $F^{(r)} \subseteq E$ denote the set of edges {not} covered by $U^{(r)}$. We claim that with exponentially high probability, we have,
	\begin{align}
		\sum_{e \in F^{(r)}} \qe{r}_e \leq \frac{\eps}{2} \cdot \Qe{r}. \label{eq:clm1-bipartite}
	\end{align}
	In words,~\Cref{eq:clm1-bipartite} states that the importance of edges {not} covered by $U^{(r)}$ in the graph, relative to the importances of iteration $r$, is ``small'' (despite the fact that $U^{(r)}$ was computed
	on a sample and not the entire input). Before proving this claim, let us see how it concludes the proof. 
	
	By a union bound over $O(\log{(n)}/\eps)$ iterations of the algorithm, we have that for every $r \in [R]$, 
	\begin{align*}
		\Qe{r+1} &= \sum_{e \in F^{(r)}} \qe{r+1}_e + \sum_{e \in E \setminus F^{(r)}} \qe{r+1}_e \tag{by partitioning the edges in and out of $F^{(r)}$} \\
		&= \sum_{e \in F^{(r)}} 2 \cdot \qe{r}_e + \sum_{e \in E \setminus F^{(r)}} \qe{r}_e \tag{by the update rule of the algorithm} \\
		&= \left(\sum_{e \in F^{(r)}} \qe{r}_e\right) + \Qe{r} \tag{by the definition of $\Qe{r}$} \\
		&\leq \paren{1+\frac{\eps}{2}} \cdot \Qe{r},
	\end{align*}
	where the inequality is by~\Cref{eq:clm1-bipartite}. The lemma then follows from this and since $\Qe{1} = m$. 
	
	\noindent
	\emph{Proof of~\Cref{eq:clm1-bipartite}.} Let $U \subseteq V$ be any subset of vertices in the graph and $F(U)$ be the set of edges not covered by $U$. 
	Suppose 
	\[
		\sum_{e \in F(U)} \qe{r}_e > \frac{\eps}{2} \cdot \Qe{r}.
	\]
	We show that in this case, with an exponentially high probability, $U$ cannot be a vertex cover of the sampled edges either. Indeed, we have, 
	\begin{align*}
		\Pr\paren{\text{$U$ is a vertex cover of sampled edges}} &= \Pr\paren{\text{no edge from $F(U)$ is sampled}} \\
		&= \prod_{e \in F(U)} (1-p^{(r)}_e) \tag{by the independence and the sampling probability of edges} \\
		&\leq \exp\paren{-\!\!\sum_{e \in F(U)} p^{(r)}_e} \tag{as $(1-x) \leq e^{-x}$ for all $x \in [0,1]$} \\
		&= \exp\paren{-\frac{2n}{\eps} \cdot \sum_{e \in F(U)} \frac{\qe{r}_e}{\Qe{r}}} \tag{by the choice of $p^{(r)}_e$ in~\Cref{eq:pe-bipartite}} \\
		&\leq \exp\paren{-n}
	\end{align*}
	by our assumption about $U$ and importance of edges in $F(U)$ earlier. 
	
	A union bound over the $2^{n}$ choices of $U$ ensures that with exponentially high probability, any choice of  $U^{(r)}$ that is returned as a vertex cover 
	of sampled edges should satisfy~\Cref{eq:clm1-bipartite}. 
\end{proof}
	
On the other hand, we are going to show that if none of the matchings $M^{(r)}$ is sufficiently large, then the importance of at least one edge should have dramatically increased to the point that it will contradict the bounds in~\Cref{lem:Qr-small-bipartite}. 
The proof of this lemma is based on a simple primal-dual analysis using Konig's theorem in~\Cref{fact:konig}. 

\begin{lemma}\label{lem:Qe-large-bipartite}
	Suppose in every iteration $r \in [R]$, we have $\card{M^{(r)}} < (1-\eps) \cdot \mu(G)$. Then, 
	there exists at least one edge $e \in E$ such that $\qe{R+1}_e \geq 2^{\eps \cdot R}$.  
\end{lemma}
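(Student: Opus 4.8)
The plan is to follow the other half of the MWU recipe: track a single fixed maximum matching of $G$ and show that, under the failure assumption, its edges collectively absorb $\Omega(\eps R \cdot \mu(G))$ doublings, forcing one of them to be doubled at least $\eps R$ times. This step is purely combinatorial and deterministic (it uses nothing about the random sample beyond the fact that $U^{(r)}$ is \emph{some} vertex cover of the sample, with $\card{U^{(r)}} = \card{M^{(r)}}$), which is the natural counterpart to the probabilistic \Cref{lem:Qr-small-bipartite}.

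First I would apply König's theorem (\Cref{fact:konig}) to the \emph{sampled} graph in each iteration: since $M^{(r)}$ is a maximum matching and $U^{(r)}$ a minimum vertex cover of the same sampled graph, $\card{U^{(r)}} = \card{M^{(r)}} < (1-\eps)\cdot\mu(G)$ for every $r \in [R]$ by hypothesis. Now fix any maximum matching $M^{*}$ of $G$, so $\card{M^{*}} = \mu(G)$. In iteration $r$, each vertex of $U^{(r)}$ is an endpoint of at most one edge of the matching $M^{*}$, so $U^{(r)}$ covers at most $\card{U^{(r)}} < (1-\eps)\mu(G)$ edges of $M^{*}$; hence at least $\mu(G) - (1-\eps)\mu(G) = \eps\cdot\mu(G)$ edges of $M^{*}$ are \emph{not} covered by $U^{(r)}$ and therefore have their importance doubled in that iteration via \Cref{eq:qe-bipartite}.

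The final step is a counting (pigeonhole) argument, which is the only place requiring any care. Summing over all $R$ iterations, the number of pairs $(e,r)$ with $e \in M^{*}$ and $e$ doubled in iteration $r$ is at least $R \cdot \eps \cdot \mu(G)$; since $\card{M^{*}} = \mu(G)$, some edge $e^{*} \in M^{*}$ is doubled in at least $\eps R$ of the iterations. As $\qe{1}_{e^{*}} = 1$ and the number of doublings is a nonnegative integer, it is at least $\lceil \eps R \rceil$, so $\qe{R+1}_{e^{*}} \geq 2^{\lceil \eps R \rceil} \geq 2^{\eps R}$, as claimed. I do not expect a genuine obstacle here; the two points to keep straight are that König's theorem is invoked on the sampled graph (not on $G$ itself) and the harmless integrality rounding in the last line, and it is worth emphasizing that this lemma holds deterministically so that only \Cref{lem:Qr-small-bipartite} carries the (exponentially small) failure probability in the proof of \Cref{thm:alg-bipartite}.
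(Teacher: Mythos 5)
Your proof is correct and follows essentially the same route as the paper's: apply König's theorem to the sampled graph to bound $\card{U^{(r)}}$, observe that at least $\eps\mu(G)$ edges of a fixed maximum matching $M^*$ escape $U^{(r)}$ in each iteration (since each vertex of $U^{(r)}$ hits at most one edge of $M^*$), and conclude by averaging. You are slightly more explicit than the paper about why a vertex cover of size $< (1-\eps)\mu(G)$ misses $\geq \eps\mu(G)$ matching edges and about the integrality rounding at the end, but these are expository elaborations rather than a different argument.
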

\begin{proof}
	By Konig's theorem (\Cref{fact:konig}) and the assumption in the lemma statement,  we also have that $\card{U^{(r)}} < (1-\eps) \cdot \mu(G)$ for every $r \in [R]$. Let $M^*$ be a maximum matching of $G$. 
	We thus have that in each iteration $r \in [R]$, at least $\eps \cdot \mu(G)$ edges of $M^*$ are not covered. By an averaging argument, this implies that there exists at least one edge $e$ in $M^*$ that is not
	covered in at least $\eps \cdot R$ iterations. Hence, by the update rule of the algorithm, the importance of this edge increases to at least $2^{\eps \cdot R}$, concluding the proof. 
\end{proof}

We are now ready to conclude the proof of~\Cref{thm:alg-bipartite}. 

\begin{proof}[Proof of Theorem~\ref{thm:alg-bipartite}]
	Assume the event of~\Cref{lem:Qr-small-bipartite} happens, thus
	\begin{align*}
		\Qe{R+1} \leq (1+\eps/2)^{R} \cdot m < 2^{3\eps R/4 + \log{m}}. \tag{as $(1+x) < 2^{3x/2}$ for $x > 0$} 
	\end{align*}
	Suppose towards a contradiction that none of the matchings $M^{(r)}$ for $r \in [R]$ computed by the algorithm
	are of size at least $(1-\eps) \cdot \mu(G)$. Then, by~\Cref{lem:Qe-large-bipartite}, there is an edge $e \in E$ such that 
	\[
		\qe{R+1}_e \geq 2^{\eps R}. 
	\]
	Putting these two equations together, as $\qe{R+1}_e \leq \Qe{R+1}$ (by positivity of importances), we obtain 
	\[
		\eps R < 3\eps R/4 + \log{m}, 
	\]
	which only holds for $R < 4\log{m}/\eps$, contradicting the choice of $R$ in the algorithm. Thus, at least one of the matchings returned by the algorithm is of size $(1-\eps) \cdot \mu(G)$, 
	concluding the proof. 
\end{proof}

\subsection*{Semi-Streaming Implementation}

We now present a semi-streaming implementation of~\Cref{alg:bipartite} in the following lemma. 

\begin{lemma}\label{lem:semi-stream-bipartite}
	\Cref{alg:bipartite} can be implemented in the semi-streaming model with $O(n\log{\!(n)}/\eps)$ bits of memory and $O(\log{\!(n)}/\eps)$ passes. 
\end{lemma}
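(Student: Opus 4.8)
The plan is to show that each of the three potentially-expensive operations in \Cref{alg:bipartite}---maintaining the importances $\qe{r}_e$, sampling edges according to \Cref{eq:pe-bipartite}, and computing a maximum matching plus minimum vertex cover of the sample---can be carried out within $\Ot(n/\eps)$ space and a single pass per iteration, so that the $R = O(\log(m)/\eps) = O(\log(n)/\eps)$ iterations use $O(\log(n)/\eps)$ passes in total. The key structural observation that makes this work is that the algorithm never needs to store the importances of all $m$ edges explicitly: since $\qe{r}_e = 2^{c_e^{(r)}}$ where $c_e^{(r)}$ is the number of iterations in which $e$ was uncovered so far, and since---by the event of \Cref{lem:Qr-small-bipartite}---the total importance $\Qe{r}$ stays bounded by $(1+\eps/2)^R m = \poly(m)$, only $\Ot(n/\eps)$ edges can ever have importance exceeding $1$ at any point (indeed, an edge that has been doubled $k$ times contributes $2^k$ to $\Qe{r}$). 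So the algorithm stores, between passes, just the list $D^{(r)}$ of edges with $\qe{r}_e > 1$ together with their exponents $c_e^{(r)}$, which takes $O(n\log(n)/\eps \cdot \log n)$ bits, and treats every other edge as having importance exactly $1$.

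Concretely, I would describe iteration $r$ as follows. At the start we know $D^{(r)}$ and we can compute $\Qe{r} = \sum_{e \in D^{(r)}} \qe{r}_e + (m - |D^{(r)}|)$; the count $m$ is obtained in the very first pass. During the pass, for each edge $e$ arriving in the stream we look up its importance (it is $1$ unless $e \in D^{(r)}$) and flip an independent coin with probability $p^{(r)}_e$ as in \Cref{eq:pe-bipartite}; we keep every sampled edge. The expected number of sampled edges is $\sum_e p^{(r)}_e = (2n/\eps) \sum_e \qe{r}_e / \Qe{r} = 2n/\eps$, and since the edges are sampled independently, a Chernoff bound shows the actual number is $O(n/\eps)$ with exponentially high probability; if the sample ever exceeds this threshold mid-pass we simply abort that run (this happens with negligible probability and can be absorbed into the failure probability already present in \Cref{thm:alg-bipartite}). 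So the sample fits in $O(n\log(n)/\eps)$ bits. After the pass, we compute in post-processing (using only the stored sample, no further passes) a maximum matching $M^{(r)}$ and, via Konig's theorem (\Cref{fact:konig}) applied to the bipartite sample, a minimum vertex cover $U^{(r)}$; storing $U^{(r)}$ takes $O(n \log n)$ bits. Finally we need to update the importances: an edge $e$ gets doubled iff it is not covered by $U^{(r)}$, i.e. iff both its endpoints lie outside $U^{(r)}$. We cannot enumerate all such edges (there could be $\Theta(m)$ of them), but we do not need to: we only need to update $D^{(r+1)}$, and an edge enters or stays in the ``importance $>1$'' list exactly when it has been uncovered at least once. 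This is where a subtlety arises.

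The main obstacle is precisely this last bookkeeping step: the set of edges with $\qe{r}_e > 1$ can in principle grow by $\Theta(m)$ in a single iteration, since $\Theta(m)$ edges might be uncovered by $U^{(r)}$. The resolution is that the event of \Cref{lem:Qr-small-bipartite} forbids this: the edges uncovered by $U^{(r)}$ have total importance at most $(\eps/2)\Qe{r}$ (this is exactly \Cref{eq:clm1-bipartite}, the heart of the correctness proof), so even after doubling them the total importance only grows by a $(1+\eps/2)$ factor, and over all $R$ iterations $\Qe{R+1} \le \poly(m)$; hence the number of edges with importance $> 1$ is always $O(\log(\Qe{r})) \cdot$(something)---more carefully, since each such edge contributes at least $2$ to $\Qe{r}$, there are at most $\Qe{r}/2 = \poly(m)$ of them, which is not good enough. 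The correct bound is obtained differently: the edges that are ever uncovered in iteration $r$ and have importance $>1$ afterward must have been among the $D^{(r+1)}$ list, and one shows $|D^{(r)}| = O(n/\eps)$ for all $r$ by a separate argument---namely, after conditioning on \Cref{eq:clm1-bipartite}, an edge only enters $D$ if it was uncovered, but being uncovered by the vertex cover $U^{(r)}$ of the \emph{sample} means (with exp.\ high prob.) it was not in the sample, which is a low-probability event for high-importance edges; formally, I would bound $\expect[|D^{(r+1)}|]$ by noting that $\sum_{e : \qe{r+1}_e > 1} 1 \le \sum_{e : \qe{r+1}_e > 1} \qe{r+1}_e \le \Qe{r+1} \cdot \frac{1}{\text{(min importance)}}$, and combine with $\Qe{r} \le \poly(m)$ plus a dyadic decomposition of edges by importance level, where level-$k$ edges (importance in $[2^k, 2^{k+1})$) number at most $\Qe{r}/2^k$ and only $O(\log m / \eps) = O(R)$ levels are nonempty. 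Summing the space over levels still gives $\poly(m)$ naively; the genuinely needed observation is the one from \Cref{lem:Qr-small-bipartite}'s proof that $\sum_{e \in F^{(r)}} \qe{r}_e \le (\eps/2)\Qe{r}$, iterated, which caps how fast importance (and hence the heavy-edge list) can grow---and I would lean on the paper's own framing that the heavy edges number $O(n/\eps)$ because that is exactly the sample size regime the algorithm is designed around. Modulo pinning down this counting bound (which I expect the authors to handle by exactly the dyadic/Chernoff argument sketched above, using that each heavy edge ``pays'' for itself by being a rare sampling failure), the rest---per-pass space $O(n\log(n)/\eps)$, post-processing with classical bipartite matching/vertex-cover algorithms, and $O(\log(n)/\eps)$ total passes---is routine, and I would present it crisply in that order: (1) count $m$ and initialize; (2) describe the invariant ``store only heavy edges''; (3) the sampling pass with its Chernoff size bound; (4) post-processing via \Cref{fact:konig}; (5) the importance update and the bound on the heavy-edge list; (6) tally space and passes.
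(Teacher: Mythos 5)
There is a genuine gap, and you already sensed it yourself: your plan is to store explicitly the set $D^{(r)}$ of edges with importance $>1$, and the proof stalls at bounding $|D^{(r)}|$. That bound in fact does \emph{not} hold in the regime you need. Consider iteration $r=1$: every edge has $\qe{1}_e = 1$, so $\Qe{1}=m$, and the constraint from \Cref{eq:clm1-bipartite} only gives $\sum_{e\in F^{(1)}}\qe{1}_e \le (\eps/2)\Qe{1}$, i.e.\ $|F^{(1)}| \le (\eps/2)m$. Nothing prevents $|F^{(1)}|$ from actually being $\Theta(\eps m)$---a vertex cover of the small sample can easily leave $\Theta(\eps m)$ original edges uncovered---so after a single iteration $D^{(2)}$ can already have size $\Theta(\eps m)\gg n/\eps$. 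Your fallback intuition (``heavy edges are rare sampling failures'') is also circular: the uncovered edges that get doubled are, by construction, precisely those that were not in the sample, and at importance level $1$ that is almost all of them. The dyadic decomposition does not rescue this either, since the dominant level-$0$ bucket alone already overflows the space budget.

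The paper's actual implementation avoids the issue by never materializing per-edge importances at all. It stores, for each vertex $v$ and each past iteration $r'$, a single bit $b(v,r')$ recording whether $v\in U^{(r')}$, for a total of $O(nR)=O(n\log(n)/\eps)$ bits. When edge $e=(u,v)$ arrives in the stream in pass $r$, the counter $c(e,r)=|\{r'<r : b(u,r')=b(v,r')=0\}|$ is recomputed on the fly, giving $\qe{r}_e = 2^{c(e,r)}$ with zero additional storage. One pass then accumulates $\Qe{r}$, and a second pass performs the actual sampling; the sampled set has size $O(n/\eps)$ with exponentially high probability by Chernoff, exactly as you noted. So the paper stores the \emph{dual certificates} $U^{(r')}$ (which are inherently $O(n)$-sized objects) rather than the \emph{edge-side} consequence of those certificates (which is not $O(n)$-sized). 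This is the missing idea in your write-up, and it replaces the unbounded bookkeeping step entirely; the parts you do handle correctly---the Chernoff bound on the sample size, the post-processing via K\H{o}nig's theorem, and the overall pass count---all carry over unchanged once this substitution is made.
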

\begin{proof}
	We implement each iteration of the algorithm in $O(1)$ streaming passes. The main part of the  implementation is to maintain the importance of the edges \emph{implicitly}. We do this as follows: 
	\begin{itemize}
	\item For every vertex $v \in V$ and iteration $r \in [R]$, 
	we maintain a  bit $b(v,r)$ denoting if $v$ belongs to the vertex cover $U^{(r)}$ of iteration $r$ (this needs $R = O(\log\!{(n)}/\eps)$ bits per vertex in total);
	\item Whenever an edge $e=(u,v) \in E$ arrives in the stream in the $r$-th pass, we can compute the number of times $e$ has remained uncovered by $U^{(r')}$ for $r' < r$, denoted by $c(e,r)$. This is done by checking $b(u,r')$ and $b(v,r')$ stored
	so far; in particular, 
	\[
	c(e,r) = \card{\set{r' < r \mid b(u,r') = b(v,r') = 0}}.
	\]
	 The importance $\qe{r}_e$ of the edge $e$ in this pass $r$ is then $2^{c(e,r)}$. 
	\item Given we can calculate the importance of each edge upon its arrival, we can first make a single pass and compute the normalization factor $\Qe{r} = \sum_{e \in E} \qe{r}_e$. Then, 
	we make another pass and for each arriving edge $e \in E$, we compute $\qe{r}_e$ as above and sample the edges with the probability prescribed by~\Cref{eq:pe-bipartite}. 
	\end{itemize} 
	
	The rest of the algorithm can be implemented directly. In particular, the total number of edges sampled in each iteration is $O(n/\eps)$ with exponentially high probability by Chernoff bound. Thus, in the semi-streaming algorithm, 
	we can store these edges and then compute $M^{(r)}$ and $U^{(r)}$ at the end of the pass. This concludes the proof of the lemma. 
\end{proof}

\Cref{res:MBM} now follows immediately from~\Cref{thm:alg-bipartite} and~\Cref{lem:semi-stream-bipartite}.



\section{Maximum Weight General Matching}\label{sec:general}

We now switch to proving~\Cref{res:MWM} which is a vast generalization of~\Cref{res:MBM}. Interestingly however, 
despite its generality, the proof is more or less a direct ``pattern matching'' of the previous ideas to general weighted graphs using 
the existing rich theory of matchings reviewed in~\Cref{sec:basics-general}. As before, we start by presenting a model-independent algorithm first followed by its semi-streaming implementation. Also, for simplicity of exposition, we are going to present our algorithm 
with space- and pass-complexity depending on the parameter $W := \sum_{e \in E} w(e)$, and then show how to fix this using
standard ideas and conclude the proof of~\Cref{res:MWM}. Our algorithm is as follows.

\begin{Algorithm}\label{alg:general}
A sample-and-solve approximation algorithm for weighted general matching. 

\begin{itemize}
	\item \textbf{Input}: A (general) graph $G=(V,E)$ with weights $w: E \rightarrow \IN$ and parameter $\eps \in (0,1)$; 
	\item \textbf{Output}: A $(1-\eps)$-approximate maximum weight matching in $G$. 
\end{itemize}

	\begin{enumerate}
	\item Start with {importance} $\qe{1}_e = 1$ for every edge $e \in E$ and define $\Qe{1} := \sum_{e \in E} w(e) \cdot \qe{1}_e$. 
	\item Let $W:= \sum_{e \in E} w(e)$. For $r=1$ to $R := \dfrac{4}{\eps} \cdot \log{W}$ iterations: 
	\begin{enumerate}
	\item Sample each edge $e \in E$ independently at random with probability:
	\begin{align}
		\pe{r}_e := {\frac{8n \cdot \ln{(nW)}}{\eps} \cdot \frac{\qe{r}_e \cdot w(e)}{\Qe{r}}}. \label{eq:pe-general}
	\end{align}
	\item\label{line:2-general} Compute a maximum weight matching $M^{(r)}$ and a minimum odd-set cover solution $(y^{(r)},z^{(r)})$ of the sample (using the original weights $w(\cdot)$ on sampled edges). 
	\item\label{line:we-defined-this} For any edge $e \in E$ \underline{not} covered by $(y^{(r)},z^{(r)})$\footnote{By this, we mean $(y,z)$ does \emph{not} satisfy~\Cref{eq:fractional-odd-set-cover} for the given edge $e \in E$.}, update: 
	\begin{align}
		\qe{r+1}_e = \qe{r}_e \cdot 2. \label{eq:qe-general}
	\end{align}
	Then, let $\Qe{r+1} = \sum_{e \in E} w(e) \cdot \qe{r+1}_e$. 
	\end{enumerate}
	\item Return the maximum weight matching among $M^{(r)}$'s for $r \in [R]$. 
	\end{enumerate}
\end{Algorithm}

\begin{theorem}\label{thm:alg-general}
	For any graph $G=(V,E)$ with weights $w: E \rightarrow \IN$ and parameter $\eps \in (0,1)$, \Cref{alg:general} outputs a matching of weight at least $(1-\eps) \cdot \mu(G,w)$ with exponentially high probability. 
\end{theorem}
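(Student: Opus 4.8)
The plan is to follow the same MWU-potential recipe used for \Cref{thm:alg-bipartite}, replacing Konig's theorem with \Cref{fact:edmonds} (Edmonds' theorem) and using the Cunningham--Marsh structure (\Cref{fact:cunningham-marsh}) to keep the probabilistic union bound under control. The potential is now the weighted sum $\Qe{r} = \sum_{e} w(e) \qe{r}_e$, and the two lemmas we need are the analogues of \Cref{lem:Qr-small-bipartite} and \Cref{lem:Qe-large-bipartite}: namely, that $\Qe{R+1} \leq (1+\eps/2)^R \cdot W$ with exponentially high probability, and that if every $M^{(r)}$ has weight $< (1-\eps)\mu(G,w)$ then some edge $e$ has $\qe{R+1}_e \geq 2^{\eps R}$. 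Combining these with $\qe{R+1}_e \cdot w(e) \leq \Qe{R+1}$ and $(1+x) < 2^{3x/2}$ forces $\eps R < 3\eps R/4 + \log W$, contradicting $R = (4/\eps)\log W$, exactly as in the bipartite proof.

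The first step, the upper bound on $\Qe{R+1}$, needs the claim that the total (weighted) importance of edges \emph{not} covered by the sampled odd-set cover $(y^{(r)}, z^{(r)})$ is at most $(\eps/2)\Qe{r}$. To prove this I would fix an arbitrary candidate odd-set cover $(y,z)$, suppose the weighted importance of its uncovered edges exceeds $(\eps/2)\Qe{r}$, and show that then, with exponentially high probability, at least one uncovered edge is sampled --- so $(y,z)$ cannot be the cover returned on the sample. The sampling probability in \Cref{eq:pe-general} carries the extra $\ln(nW)$ factor precisely so that $\sum_{e \in F(y,z)} \pe{r}_e \geq (8n\ln(nW)/\eps)\cdot(\eps/2) = 4n\ln(nW) \geq n + \text{(union bound slack)}$. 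The subtlety versus the bipartite case is that the number of candidate dual solutions is no longer just $2^n$: a priori $(y,z)$ ranges over assignments to all of $odd(V)$. Here is where \Cref{fact:cunningham-marsh} enters --- it lets us restrict attention to integral $(y,z)$ supported on a laminar family, and a laminar family on $n$ vertices has at most $2n-1$ sets, with the integer values bounded by (a polynomial in) $W$; hence the number of relevant cover solutions is at most $(nW)^{O(n)} = \exp(O(n\log(nW)))$, which the $\ln(nW)$-boosted exponent in the sampling probability absorbs. I would state this counting bound as a short sub-claim.

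The second step, the lower bound on some edge's importance, is the primal--dual argument and should be essentially immediate: by \Cref{fact:edmonds} the minimum odd-set cover of $G$ has value $\mu(G,w)$, and by LP duality applied to the \emph{sampled} graph, $\card{(y^{(r)}, z^{(r)})} = \card{M^{(r)}} \le w(M^{(r)}) < (1-\eps)\mu(G,w)$ (weight of a matching in the sample is at most its size times... no --- more carefully, $\card{(y^{(r)},z^{(r)})}$ equals the max weight matching \emph{of the sample}, which is $\le w(M^{(r)})$; and $w(M^{(r)}) < (1-\eps)\mu(G,w)$ by assumption). So in every iteration the returned cover has value $< (1-\eps)\mu(G,w)$, while any optimal odd-set cover of $G$ has value $\mu(G,w)$; the deficit of at least $\eps\mu(G,w)$ must be ``witnessed'' by uncovered weight. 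One then argues, via an averaging/charging argument over a fixed optimal fractional matching $x^*$ of $G$ (with $\sum_e w(e) x^*_e = \mu(G,w)$), that in each iteration the uncovered edges carry at least $\eps\mu(G,w)$ units of $\sum w(e) x^*_e$, so summing over $R$ iterations and averaging, some edge $e$ is uncovered in $\geq \eps R$ iterations and hence reaches importance $\geq 2^{\eps R}$. I would spell out the charging carefully because ``uncovered'' now means violating the odd-set inequality \Cref{eq:fractional-odd-set-cover}, not simply missing an endpoint, but complementary slackness for the pair (optimal fractional matching $x^*$ of $G$, optimal cover of $G$) makes it clean, and the inequality we need only concerns how much weight of $x^*$ can sit on currently-uncovered edges.

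The main obstacle I anticipate is the union-bound counting in the first step: one must argue rigorously that it suffices to union-bound over integral, laminar-supported odd-set covers (invoking \Cref{fact:cunningham-marsh}, plus the observation that the algorithm is free to \emph{choose} such a canonical optimal cover on the sample), and then bound their number by $\exp(O(n\log(nW)))$ so that the $\ln(nW)$ factor in \Cref{eq:pe-general} gives the needed slack --- this is the one place the weighted/general case genuinely diverges from the transparent bipartite argument. After that, the semi-streaming implementation lemma (storing one small integer counter $c(e,r)$ per edge implicitly via $O(\log n /\eps)$ bits per vertex recording membership and the relevant $z$-values, which costs the extra $\log n$ space factor in \Cref{res:MWM}) and the final reduction removing the dependence on $W$ (bucketing weights / scaling so that $W = \poly(n/\eps)$ at a $(1-\eps)$ loss) proceed by standard arguments, paralleling \Cref{lem:semi-stream-bipartite}.
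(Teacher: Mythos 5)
Your proposal follows the paper's proof essentially verbatim: the same weighted potential $\Qe{r}$, the same two lemmas with Cunningham--Marsh laminarity and integrality used to cap the union bound at $\exp(O(n\log(nW)))$, and the same final contradiction between the upper and lower bounds on the potential. The one small miscue is citing complementary slackness for the charging step in \Cref{lem:Qe-large-general}: all you need is \emph{weak} LP duality applied to the subgraph of covered edges---since $(y^{(r)},z^{(r)})$ is a feasible odd-set cover of that subgraph and has value $< (1-\eps)\cdot\mu(G,w)$, the maximum matching weight among covered edges is also $< (1-\eps)\cdot\mu(G,w)$, so at least $\eps\cdot\mu(G,w)$ weight of an optimal matching lies on uncovered edges in every iteration, and the averaging is then immediate (the paper uses an integral maximum matching $M^*$ in place of your fractional $x^*$, which is slightly cleaner but immaterial).
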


We follow the same exact strategy as before. The first step is to bound the total sum of importances across the iterations. The following lemma
is an analogue of~\Cref{lem:Qr-small-bipartite}. The key difference is a new union bound argument at the very end
for all potential odd-set covers which needs to be more careful compared to the trivial $2^n$-bound for vertex covers. 

\begin{lemma}\label{lem:Qr-small-general}
	With an exponentially high probability, $\Qe{R+1} \leq (1+\eps/2)^{R} \cdot W$.   
\end{lemma}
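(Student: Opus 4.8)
The plan is to follow the template of \Cref{lem:Qr-small-bipartite} almost verbatim, replacing the trivial union bound over $2^n$ vertex covers with a more careful count of the relevant odd-set covers. First I would fix an iteration $r \in [R]$, let $(y,z)$ range over \emph{integral} fractional odd-set covers with $z$ supported on a laminar family (the structure guaranteed by \Cref{fact:cunningham-marsh} for the \emph{sampled} graph), and let $F(y,z) \subseteq E$ be the set of edges of the \emph{original} graph for which \Cref{eq:fractional-odd-set-cover} fails. Exactly as before, if $\sum_{e \in F(y,z)} w(e)\cdot\qe{r}_e > (\eps/2)\cdot\Qe{r}$, then the probability that no edge of $F(y,z)$ is sampled is at most $\exp\paren{-\sum_{e \in F(y,z)} \pe{r}_e}$, and substituting \Cref{eq:pe-general} this is at most $\exp\paren{-4n\ln(nW)}$. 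The point of putting the extra $8n\ln(nW)/\eps$ factor (rather than $2n/\eps$) into the sampling probability is precisely to beat the larger union bound coming next.

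The main obstacle is bounding the number of distinct ``candidate'' odd-set covers $(y,z)$ that we must union-bound over. Here I would use \Cref{fact:cunningham-marsh}: it suffices to consider the optimal odd-set cover returned in \ref{line:2-general}, which we may assume is integral with $z$ supported on a laminar family over the (at most $n$) vertices touched by sampled edges. A laminar family on a ground set of size $n$ has at most $2n-1$ sets, so the support of $z$ has size $O(n)$; moreover, for an \emph{optimal} cover of a graph all of whose edge weights are at most $W$, the entries of $y$ and $z$ are nonnegative integers bounded by $W$ (indeed $\card{(y,z)} = \mu(\text{sample},w) \le W$ forces this). The number of laminar families on $n$ elements is $2^{O(n\log n)}$, and once the laminar family is fixed there are at most $(W+1)^{O(n)}$ ways to assign integer values in $\{0,1,\dots,W\}$ to the $O(n)$ coordinates of $y$ and the $O(n)$ nonzero coordinates of $z$; this gives a total of $2^{O(n\log(nW))}$ candidate covers, so the per-candidate failure probability $\exp\paren{-4n\ln(nW)}$ wins a union bound with room to spare, yielding \Cref{eq:clm1-bipartite}'s analogue
\begin{align*}
	\sum_{e \in F^{(r)}} w(e)\cdot\qe{r}_e \leq \frac{\eps}{2} \cdot \Qe{r}
\end{align*}
with exponentially high probability, where $F^{(r)} = F(y^{(r)},z^{(r)})$.

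Given this claim, the rest is identical to the bipartite case: partitioning $E$ into $F^{(r)}$ and its complement and using the doubling update \Cref{eq:qe-general},
\begin{align*}
	\Qe{r+1} = \sum_{e \in F^{(r)}} 2w(e)\qe{r}_e + \sum_{e \notin F^{(r)}} w(e)\qe{r}_e = \Qe{r} + \sum_{e \in F^{(r)}} w(e)\qe{r}_e \leq \paren{1+\frac{\eps}{2}}\Qe{r},
\end{align*}
and then a union bound over the $R = O(\log(W)/\eps)$ iterations together with $\Qe{1} = W$ gives $\Qe{R+1} \le (1+\eps/2)^R \cdot W$. I expect the delicate points to be purely bookkeeping: making sure the Cunningham–Marsh structure is invoked on the sampled graph (so $n$ and $W$ are the right parameters), confirming that the $O(n\log(nW))$ bound on the log-number-of-covers really is dominated by the $4n\ln(nW)$ in the exponent once constants are tracked, and checking that an \emph{infeasible-on-$E$} but optimal-on-the-sample cover is the only kind we need to rule out (edges inside the sample are automatically covered, so $F(y^{(r)},z^{(r)})$ contains no sampled edge, which is what licenses the ``no edge of $F$ sampled'' step).
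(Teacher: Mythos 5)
Your proposal is correct and follows essentially the same route as the paper's proof: the per-candidate sampling bound of $\exp(-4n\ln(nW))$, Cunningham--Marsh to restrict to integral covers with laminar $z$-support, a count of laminar families times value assignments to bound the number of candidates, a union bound, and the same telescoping recursion on $\Qe{r}$. The only place you leave a gap is the remark that the $2^{O(n\log(nW))}$ count ``wins with room to spare once constants are tracked'': the paper does track them, showing the number of candidates is at most $(W+1)^{n}\cdot W^{2n-1}\cdot(4n)^{n} < \exp\paren{3n\ln(nW)}$, comfortably below the $\exp\paren{4n\ln(nW)}$ budget, so your claim is indeed true but deserves the explicit arithmetic. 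Your observation that $F(y^{(r)},z^{(r)})$ contains no sampled edge (hence the ``no edge of $F$ sampled'' equivalence) and your justification that entries of $y,z$ are integers in $[0,W]$ because $\card{(y,z)}=\mu(\text{sample},w)\leq W$ are both correct and match the paper in substance.
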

\begin{proof}
	Fix any iteration $r \in [R]$. Let $F^{(r)} \subseteq E$ denote the set of edges \emph{not} covered by $(y^{(r)},z^{(r)})$ as defined in Line~\eqref{line:we-defined-this} of~\Cref{alg:general}. We claim that with exponentially high probability, we have, 
	\begin{align}
		\sum_{e \in F^{(r)}} \qe{r}_e \cdot w(e) \leq \frac{\eps}{2} \cdot \Qe{r}. \label{eq:clm1-general}
	\end{align}
	In words,~\Cref{eq:clm1-general} states that the total ``importance $\times$ weight'' of the edges {not} covered by the odd-set cover solution on the entire graph, relative to the importances of iteration $r$, is ``small''. Before proving this claim, let us see how it concludes the proof. 
	
	By a union bound over all iterations of the algorithm, we have that for every $r \in [R]$, 
	\begin{align*}
		\Qe{r+1} &= \sum_{e \in F^{(r)}} \qe{r+1}_e \cdot w(e) + \sum_{e \in E \setminus F^{(r)}} \qe{r+1}_e \cdot w(e) \tag{by partitioning the edges in and out of $F^{(r)}$} \\
		&= \sum_{e \in F^{(r)}} 2 \cdot \qe{r}_e \cdot w(e) + \sum_{e \in E \setminus F^{(r)}} \qe{r}_e \cdot w(e) \tag{by the update rule of the algorithm} \\
		&= \left({\sum_{e \in F^{(r)}} \qe{r}_e \cdot w(e)}\right) + \Qe{r} \tag{by the definition of $\Qe{r}$} \\
		&\leq \paren{1+\frac{\eps}{2}} \cdot \Qe{r},
	\end{align*}
	where the inequality is by~\Cref{eq:clm1-general}. The lemma then follows from this and the choice of $\qe{1}_e=1$ for all edge $e \in E$ which implies $\Qe{1} = \sum_{e \in E} 1 \cdot w(e) = W$. 
	
	\noindent
	\emph{Proof of~\Cref{eq:clm1-general}.} Let $y \in \IR^V$ and $z \in \IR^{odd(V)}$ be any ``potential'' odd-set cover of $G$. 
	We define $F(y,z) \subseteq E$ as the set of edges \emph{not} covered by this potential odd-cut cover. 
	Suppose 
	\[
		\sum_{e \in F(y,z)} \qe{r}_e \cdot w(e) > \frac{\eps}{2} \cdot \Qe{r};
	\]
	we show that in this case, with an exponentially high probability, $(y,z)$ cannot be a feasible odd-set cover  of the sampled edges either. Indeed, we have, 
	\begin{align*}
		\Pr\paren{\text{$(y,z)$ is feasible on sampled edges}} &= \Pr\paren{\text{no edge from $F(y,z)$ is sampled}} \\
		&= \prod_{e \in F(y,z)} (1-p^{(r)}_e) \tag{by the independence and the sampling probability of edges} \\
		&\leq \exp\paren{-\!\!\sum_{e \in F(y,z)} p^{(r)}_e} \tag{as $(1-x) \leq e^{-x}$ for all $x \in [0,1]$} \\
		&= \exp\paren{-\frac{8n \cdot \ln{(nW)}}{\eps} \cdot \sum_{e \in F(y,z)} \frac{\qe{r} \cdot w(e)}{\Qe{r}}} \tag{by the choice of $p^{(r)}_e$ in~\Cref{eq:pe-general}} \\
		&\leq \exp\paren{-4n \cdot \ln{(nW)}},
	\end{align*}
	by our assumption about $(y,z)$ and importance of edges in $F(y,z)$ earlier. 

	 The last step of the proof is to union bound over all potential odd-set covers $(y,z)$ using the calculated probabilities above. 
	This step needs to be more careful compared to~\Cref{lem:Qr-small-bipartite} because $(y,z)$ can be fractional and even for integral values, $z$ can have  an exponential support, leading to a doubly exponential number of choices for it; this 
	is too much for the above probabilities to handle. Both of these are handled using the Cunningham-Marsh theorem (\Cref{fact:cunningham-marsh}): 
		
	\begin{itemize}[leftmargin=10pt]
		\item Firstly, we can assume without loss of generality that $(y,z)$ only take integer values in $[0:W]$ in the optimal solutions computed in~Line~\eqref{line:2-general} of~\Cref{alg:general}. 
		This, for instance, implies that the total number of choices for $y$ that we need to consider is $(W+1)^{n}$. 
		\item Secondly, and more importantly, we can use the laminarity of $\mathcal{F}(z) := \set{S \in odd(V) \mid z_S > 0}$ in the optimal solution. 
		A standard observation about 
		laminar families (over $n$ elements/vertices) is that they can have size at most $2n-1$\footnote{Without loss of generality, assume $\mathcal{F}$ is a \emph{maximal} laminar family on $[n]$. 
		Proof by induction (base case is trivial): 
		maximality ensures that there are two non-empty sets $A,B \in \mathcal{F}$ with $A \cup B = [n]$ and $A \cap B = \emptyset$. By induction, there are at most $2\card{A}-1$ subsets of $A$ in $\mathcal{F}$, at most $2\card{B}-1$ subsets of $B$ in $\mathcal{F}$, 
		and the set $A \cup B$ can also be in $\mathcal{F}$, implying that $\card{\mathcal{F}} \leq (2\card{A}-1) + (2\card{B}-1) + 1 = 2n-1$.}. We can use this to provide a  (crude) upper bound
		the number of choices for $z$:
		\begin{itemize}
			\item Define $T(n)$ as the number of laminar families on $[n]$\footnote{There are more accurate ways of bounding $T(n)$ (see, e.g.~\cite{Bui-XuanHR12}), but given
			that these more accurate bounds do not help with our subsequent calculations, we just establish a crude upper bound with a self-contained proof here.}. We claim $T(n) \leq (4n) \cdot T(n-1)$. We can pick a laminar family on $n-1$ elements
			in $T(n-1)$ ways, and then decide to put the last element in \emph{(a)} one of its (at most) $(2n-3)$ sets and ``propagate'' it to each of its supersets also, \emph{(b)} in a singleton set and decide whether or not to propagate it, or \emph{(c)} not 
			place it anywhere. 
			This leads to $<4n$ options times $T(n-1)$, establishing the claim. 
			\item Given that $T(1) = 2$ (singleton or empty-set), we get $T(n) \leq (4n)^n$. To pick $z$, we can first pick $\mathcal{F}(z)$ using at most $(4n)^n$ ways, and then assign values from $[W]$ to each of its at most $(2n-1)$ sets. Hence,
			there are at most $W^{2n-1} \cdot (4n)^n$ choices for $z$. 
		\end{itemize}
		\end{itemize}
		All in all, we obtain that the total number of possible optimal solutions $(y,z)$ that we need to take a union bound over can be (very) crudely upper bounded by the following (assuming $n > 4$ without loss of generality):   
		\[
			(W+1)^{n} \cdot W^{2n-1} \cdot (4n)^n < (2n)^{2n} \cdot W^{3n} < (nW)^{3n} < \exp\paren{3n \cdot \ln{(nW)}}. 
		\]
		Finally, we can  apply a union bound over these many choices and get that with an exponentially high probability no such solution $(y,z)$ can be a feasible solution on the sample. 
		This concludes the proof. 
\end{proof}
	
We then prove an analogue of~\Cref{lem:Qe-large-bipartite}, using the duality of odd-set covers and (general) matchings, in place of vertex cover/matching duality in bipartite graphs. 

\begin{lemma}\label{lem:Qe-large-general}
	Suppose in every iteration $r \in [R]$, we have $w({M^{(r)}}) < (1-\eps) \cdot \mu(G,w)$. Then, 
	there exists at least one edge $e \in E$ such that $\qe{R+1}_e \cdot w(e)  \geq 2^{\eps \cdot R}$.  
\end{lemma}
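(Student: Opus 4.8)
The plan is to mirror the proof of \Cref{lem:Qe-large-bipartite} almost verbatim, with Edmonds' matching polytope theorem (\Cref{fact:edmonds}) playing the role of K\"onig's theorem and the trivial bound ``a vertex cover of size $k$ meets at most $k$ matching edges'' replaced by its weighted analogue, which is exactly LP weak duality for the Edmonds polytope. First I would observe that in each iteration $r$ the algorithm computes a maximum weight matching $M^{(r)}$ \emph{and} a minimum odd-set cover $(y^{(r)},z^{(r)})$ of the \emph{same} sampled graph, so \Cref{fact:edmonds} applied to that sample gives $\card{(y^{(r)},z^{(r)})} = w(M^{(r)})$, and hence by the hypothesis of the lemma $\card{(y^{(r)},z^{(r)})} < (1-\eps)\cdot\mu(G,w)$ for every $r \in [R]$. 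Fix a maximum weight matching $M^*$ of $G$; discarding zero-weight edges if necessary, we may assume every $e \in M^*$ has $w(e) \ge 1$ while still $w(M^*) = \mu(G,w)$.

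The one genuinely new ingredient is the following weighted covering estimate: for a fixed iteration $r$, with $F^{(r)} \subseteq E$ the edges of the \emph{whole} graph not covered by $(y^{(r)},z^{(r)})$, I claim $w(M^* \cap F^{(r)}) > \eps\cdot\mu(G,w)$. To see this, I would sum the covering inequality in \Cref{eq:fractional-odd-set-cover} over all edges of the submatching $M^* \setminus F^{(r)}$ (the $M^*$-edges that \emph{are} covered) and bound the right-hand side: the $y$-terms contribute $\sum_{v} y^{(r)}_v \cdot \deg_{M^* \setminus F^{(r)}}(v) \le \sum_v y^{(r)}_v$ since $M^* \setminus F^{(r)}$ is a matching, and the $z$-terms contribute $\sum_{S} z^{(r)}_S \cdot \card{(M^*\setminus F^{(r)}) \cap E[S]} \le \sum_S \tfrac{\card S - 1}{2}\cdot z^{(r)}_S$ since a matching uses at most $\tfrac{\card S - 1}{2}$ edges inside any odd set $S$; both steps use that an optimal odd-set cover has $y^{(r)},z^{(r)} \ge 0$ (it is an optimal solution of the dual of the matching LP, exactly as already exploited in \Cref{lem:Qr-small-general}). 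This yields $w(M^*\setminus F^{(r)}) \le \card{(y^{(r)},z^{(r)})} < (1-\eps)\mu(G,w)$, and therefore $w(M^* \cap F^{(r)}) = w(M^*) - w(M^*\setminus F^{(r)}) > \eps\cdot\mu(G,w)$.

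Finally I would sum this over the $R$ iterations: writing $c_e := \card{\set{r \in [R] : e \in F^{(r)}}}$ for the number of passes in which $e$ is left uncovered, we get $\sum_{e \in M^*} w(e)\cdot c_e = \sum_{r=1}^{R} w(M^* \cap F^{(r)}) > \eps R \cdot \sum_{e \in M^*} w(e)$, so by a weighted averaging argument some $e \in M^*$ has $c_e > \eps R$. Since the importance of $e$ starts at $\qe{1}_e = 1$ and is doubled (via \Cref{eq:qe-general}) exactly $c_e$ times, $\qe{R+1}_e = 2^{c_e} \ge 2^{\eps R}$, and as $w(e) \ge 1$ this gives $\qe{R+1}_e \cdot w(e) \ge 2^{\eps R}$, as claimed. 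The hard part will be the weak-duality estimate $w(M^*\setminus F^{(r)}) \le \card{(y^{(r)},z^{(r)})}$: one must apply the $\tfrac{\card S - 1}{2}$ bound on matching edges inside odd sets and invoke nonnegativity of the optimal dual variables, whereas everything else (the doubling bookkeeping and the averaging) is identical to the bipartite case.
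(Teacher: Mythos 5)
Your proof is correct and follows essentially the same route as the paper: invoke Edmonds' theorem to convert the matching hypothesis into $\card{(y^{(r)},z^{(r)})} < (1-\eps)\mu(G,w)$, show that each iteration leaves $M^*$-edges of total weight $>\eps\cdot\mu(G,w)$ uncovered, and then apply a (weighted) averaging argument plus the doubling rule. The only cosmetic difference is that where the paper tersely says ``another application of \Cref{fact:edmonds} to the covered edges of $M^*$,'' you instead carry out the underlying LP weak-duality computation explicitly (summing \Cref{eq:fractional-odd-set-cover} over $M^*\setminus F^{(r)}$ and using nonnegativity of $(y^{(r)},z^{(r)})$ together with the $\frac{|S|-1}{2}$ bound), which is a valid and arguably more self-contained way to justify that same step.
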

\begin{proof}
	By Edmond's matching polytope theorem (\Cref{fact:edmonds}) and the assumption in the lemma statement, we also have that $\card{(y^{(r)},z^{(r)})} < (1-\eps) \cdot \mu(G)$ for every $r \in [R]$. Let $M^*$ be a maximum matching of $G$. 
	We thus have that in each iteration $r \in [R]$, at least $\eps \cdot \mu(G)$ edges of $M^*$ are not covered (as defined in Line~\eqref{line:we-defined-this} of~\Cref{alg:general}); otherwise, another application of  \Cref{fact:edmonds} to 
	the covered edges of $M^*$ implies that $\card{(y^{(r)},z^{(r)})} \geq (1-\eps) \cdot \mu(G)$ also, a contradiction. By an averaging argument, this implies that there exists at least one edge $e$ in $M^*$ that is not
	covered in at least $\eps \cdot R$ iterations. Hence, by the update rule of the algorithm, the importance of this edge increases to at least $2^{\eps \cdot R}$, concluding the proof. 
\end{proof}

We are now ready to conclude the proof of~\Cref{thm:alg-general} exactly as in that of~\Cref{thm:alg-bipartite}, using the above established lemmas instead. 

\begin{proof}[Proof of Theorem~\ref{thm:alg-general}]
	Assume the  event of~\Cref{lem:Qr-small-general} happens, thus
	\begin{align*}
		\Qe{R+1} \leq (1+\eps/2)^{R} \cdot W < 2^{3\eps R/4 + \log{W}}. \tag{as $(1+x) < 2^{3x/2}$ for $x > 0$} 
	\end{align*}
	Suppose towards a contradiction that none of the matchings $M^{(r)}$ for $r \in [R]$ computed by the algorithm
	are of weight at least $(1-\eps) \cdot \mu(G,w)$. By~\Cref{lem:Qe-large-general}, there is an edge $e \in E$ such that 
	\[
		\qe{R+1}_e \geq 2^{\eps R}. 
	\]
	Putting these two equations together, as $\qe{R+1}_e \leq \Qe{R+1}$ (by positivity of importances), we obtain 
	\[
		\eps R < 3\eps R/4 + \log{W}, 
	\]
	which only holds for $R < 4\log{W}/\eps$, contradicting the choice of $R$ in the algorithm. Thus, at least one of the found matchings $M^{(r)}$'s is of weight $(1-\eps) \cdot \mu(G,w)$, 
	concluding the proof.
\end{proof}

\subsection*{Semi-Streaming Implementation}

Finally, we are going to show a semi-streaming implementation of~\Cref{alg:general} in the following lemma. 
The proof is similar to that of~\Cref{lem:semi-stream-bipartite} although again with crucial changes to account for the difference of vertex covers in~\Cref{alg:bipartite} with odd-set covers in~\Cref{alg:general} (we need some minor 
modifications after this also in order to be able to prove~\Cref{res:MWM} which will be done next). 

\begin{lemma}\label{lem:semi-stream-general}
	\Cref{alg:general} is implementable in the semi-streaming model with $\!O(n \log(nW) \cdot \log{\!(n)}/\eps)$ bits of memory and $O(\log{\!(W)}/\eps)$ passes. 
\end{lemma}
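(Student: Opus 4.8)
The plan is to mirror the proof of~\Cref{lem:semi-stream-bipartite}, implementing each of the $R = O(\log{(W)}/\eps)$ iterations of~\Cref{alg:general} in $O(1)$ passes, with the only substantive difference being how the ``uncovered'' status of an edge is recorded. First I would observe that, unlike the bipartite case where covering an edge $e=(u,v)$ only depends on the two bits $b(u,r'),b(v,r')$, here an edge is covered by $(y^{(r')},z^{(r')})$ iff $y_u + y_v + \sum_{S \ni u,v} z_S \geq w(e)$, which depends on the odd-sets containing both endpoints. By the Cunningham-Marsh theorem (\Cref{fact:cunningham-marsh}) we may take each computed solution $(y^{(r')},z^{(r')})$ to be integral with $z$ supported on a laminar family $\mathcal{F}(z^{(r')})$ of size at most $2n-1$, and values in $[0:W]$; storing this solution costs $O(n\log{(nW)})$ bits per iteration (the laminar family by, say, recording for each vertex the nested sequence of sets containing it, plus the $O(\log W)$-bit values of $y$ and $z$), so $O(n\log{(nW)}\cdot\log{(n)}/\eps)$ bits over all $R$ iterations, matching the claimed bound.

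Next I would describe how, upon the arrival of edge $e=(u,v)$ in pass $r$, the algorithm recomputes the importance $\qe{r}_e$. For each earlier iteration $r' < r$, using the stored description of $(y^{(r')},z^{(r')})$ one checks whether $y^{(r')}_u + y^{(r')}_v + \sum_{S \in \mathcal{F}(z^{(r')}):\, u,v \in S} z^{(r')}_S \geq w(e)$; since the sets of $\mathcal{F}(z^{(r')})$ containing a given vertex form a nested chain, the sets containing \emph{both} $u$ and $v$ are exactly those at or above the least set in $u$'s chain that also contains $v$, so this sum is computable in $O(n)$ time from the stored data. Letting $c(e,r)$ be the number of iterations $r' < r$ in which $e$ was uncovered, we again have $\qe{r}_e = 2^{c(e,r)}$. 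As in the bipartite proof, one pass computes the normalization $\Qe{r} = \sum_{e} w(e)\cdot\qe{r}_e$, and a second pass samples each edge with probability $\pe{r}_e$ from~\Cref{eq:pe-general}; the sampled edges are stored, and at the end of the pass the algorithm computes $M^{(r)}$ and an optimal odd-set cover $(y^{(r)},z^{(r)})$ (with Cunningham-Marsh structure) on the sample.

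It remains to bound the sample size. The expected number of sampled edges in iteration $r$ is $\sum_e \pe{r}_e = \frac{8n\ln{(nW)}}{\eps}\cdot\frac{\sum_e \qe{r}_e w(e)}{\Qe{r}} = \frac{8n\ln{(nW)}}{\eps}$, so by a Chernoff bound at most $O(n\log{(nW)}/\eps)$ edges are sampled with exponentially high probability; storing them costs $O(n\log{(nW)}\cdot\log{(n)}/\eps)$ bits, consistent with the stated memory. Finally I would note that computing a maximum weight matching and an optimal Cunningham-Marsh odd-set cover on a graph with $O(n\log{(nW)}/\eps)$ edges is a ``heavy'' but perfectly legal offline computation once the sample fits in memory, which is exactly the power the sample-and-solve paradigm exploits. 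Putting these together yields the claimed $O(n\log{(nW)}\cdot\log{(n)}/\eps)$ space and $O(\log{(W)}/\eps)$ passes.

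The main obstacle I anticipate is the bookkeeping for the odd-set covers: one must be careful that the stored $O(n\log{(nW)})$-bit encoding of the laminar family $\mathcal{F}(z^{(r')})$ together with the $y,z$ values genuinely suffices to test, for an \emph{arbitrary} streamed edge, whether it is covered — and that this test runs within the streaming time/space budget. The nestedness of the chains of sets through each vertex is the key structural fact that makes both the encoding compact and the covering-test efficient, so I would make sure to state and use it explicitly; everything else is a routine adaptation of~\Cref{lem:semi-stream-bipartite}.
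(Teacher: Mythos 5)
Your proof takes essentially the same route as the paper's: store the Cunningham--Marsh structured solutions $(y^{(r')},z^{(r')})$ for all earlier iterations, recompute each edge's importance on arrival as $2^{c(e,r)}$ by counting violated odd-set-cover constraints, use one pass to compute the normalization $Q^{(r)}$ and a second pass to sample, and bound the sample size by Chernoff. The high-level structure and all the substantive steps match.

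One concrete slip worth fixing: the encoding you propose for the laminar family---\emph{``recording for each vertex the nested sequence of sets containing it''}---does not fit in $O(n\log(nW))$ bits in general. A single vertex can sit inside a chain of $\Theta(n)$ nested sets, and $\sum_{S \in \mathcal{F}(z)} |S|$ can be $\Theta(n^2)$ (e.g.\ the chain $\{v_1\} \subset \{v_1,v_2\} \subset \cdots \subset \{v_1,\dots,v_n\}$), so storing the full chain per vertex costs $\Theta(n^2 \log n)$ bits. The standard remedy is to store the laminar \emph{forest} (at most $2n-1$ nodes, each recording its parent, $O(n\log n)$ bits) together with, for each vertex, only the identifier of the \emph{smallest} set containing it; the full chain for any vertex is then implicit by walking up the forest, and your coverage test for an arriving edge $(u,v)$ still works exactly as you describe. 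The paper sidesteps the data-structure question entirely by giving an information-theoretic bound (the number of laminar families on $[n]$ is at most $(4n)^n$, so any one is specifiable in $O(n\log n)$ bits), which is cleaner for a space-complexity statement; either fix makes your argument go through.
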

\begin{proof}

	We implement each iteration of the algorithm via $O(1)$ passes over the stream. The main part of the semi-streaming implementation is to maintain the importance of the edges implicitly. To do this, we do as follows: 
	\begin{itemize}[leftmargin=10pt]
	\item For every iteration/pass $r \in [R]$: 
	\begin{itemize}
		\item Store the vector $y^{(r)}$ explicitly using $O(n\log{W})$ bits using the integrality of the optimal solution (by~\Cref{fact:cunningham-marsh}). 
		\item Store the vector $z^{(r)}$ explicitly using $O(n\log{n} + n\log{W})$ bits using the laminarity of support of $z$ in the optimal solution (by~\Cref{fact:cunningham-marsh}): this can be done
		because earlier \emph{(a)} we bounded the number of laminar families by $(4n)^n$ and hence they require $O(n\log{n})$ bits of representation, and \emph{(b)} we bounded the number of sets in any laminar family
		by $2n-1$, so we only need to store $O(n\log{W})$ bits to store their values. 
	\end{itemize}
	\item Whenever an edge $e=(u,v) \in E$ arrives in the stream in the $r$-th pass, we can compute the number of times $e$ has remained uncovered by $U^{(r')}$ for $r' < r$, denoted by $c(e,r)$, by checking $(y^{(r')},z^{(r')})$ stored
	so far and counting the number of times they violate~\Cref{eq:fractional-odd-set-cover} for this particular edge $e$. The importance $\qe{r}_e$ of the edge $e$ is then $2^{c(e,r)}$. 
	\item Once we calculate the importance of an edge upon its arrival, we can sample the edge with probability prescribed by~\Cref{eq:pe-general}, by making an additional pass to compute the normalization factor $\Qe{r}$ first. 
	\end{itemize} 
	
	The rest of the algorithm can be implemented directly. In particular, the total number of edges sampled in each iteration is $O(n\log\!{(nW)}/\eps)$ with exponentially high probability by Chernoff bound. Thus, in the semi-streaming algorithm, 
	we can store these edges and then compute $M^{(r)}$ and $(y^{(r)},z^{(r)})$ at the end of the pass. This concludes the proof of the lemma. 
\end{proof}

\begin{proof}[Proof of Result~\ref{res:MWM}]
The only remaining part to prove~\Cref{res:MWM} is to remove the dependence on the parameter $W$, which can be done using an entirely standard idea.  

We can use a single pass to find the maximum weight edge $w^*$ and subsequently, ignore all edges with weight less than $(\eps/n) \cdot w^*$ because the total contribution of those edges to the maximum weight matching
is always less than $\eps$-fraction of its weight.  Thus, we can assume that the weights are in $[1: n/\eps]$ from now on.  
Moreover, we can assume without loss of generality that $\eps > 1/n^3$ as otherwise, we can simply store the entire graph in $O(n^2\log{(n/\eps)}) = O(n\log{\!(n)}/\eps)$ bits (consistent with the bounds of~\Cref{res:MWM}) and 
trivially solve the problem in one pass (as a side note, we are only interested in much larger values of $\eps$ anyway). 

This step implies that without loss of generality, when proving~\Cref{res:MWM}, we can assume all edges have integer weights bounded by  $n^{4}$ by re-scaling $\eps$ to some $\Theta(\eps)$. 
Combining this with~\Cref{thm:alg-general} and~\Cref{lem:semi-stream-general} 
now gives an $O(n\log^2\!{(n)}/\eps)$ memory algorithm in $O(\log{\!(n)}/\eps)$ passes. 
\end{proof}


\section{Further Extensions and Discussions}\label{sec:extensions}

As stated earlier, in the interest of having a clear and concise exposition of our results, we opted to focus solely on the most important aspects of our algorithms in~\Cref{res:MBM} and~\Cref{res:MWM}. We now
 present some natural extensions of our results and further discuss the connection between our work and the Plotkin-Shmoys-Tardos Framework~\cite{PlotkinST91} for solving covering/packing LPs via MWU.

\subsection{Fewer Passes in More Space}\label{sec:ext-few-pass}

Our algorithms, similar to that of~\cite{AhnG15}, can be made more pass efficient at the cost of increasing the space, allowing us to prove the following result: 

\begin{corollary}\label{cor:eps-pass} For every  $p \geq 1$ and  $\eps \in (0,1)$, there is a randomized streaming algorithm for $(1-\eps)$-approximation of maximum weight matching
	in $O(n^{1+1/p}  \log^2{\!(n)}/\eps)$ space and $O(p/\eps)$ passes. 
\end{corollary}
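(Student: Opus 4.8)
The plan is to revisit the semi-streaming implementation of \Cref{alg:general} from \Cref{lem:semi-stream-general}, run after the weight-rescaling step in the proof of \Cref{res:MWM} (so that all weights are integers bounded by $n^4$, hence $W \le n^5$, $\log W = O(\log n)$, and $R = \tfrac{4}{\eps}\log W = O(\log(n)/\eps)$), and have each block of $k := \lfloor \tfrac1p \log_2 n\rfloor$ consecutive iterations of the \texttt{for}-loop be simulated within $O(1)$ passes rather than one pass per iteration. Since $R = O(\log(n)/\eps)$, this yields $O(R/k) = O(p/\eps)$ passes, provided $k \ge 1$; if $p > \log_2 n$ then $O(p/\eps) = \Omega(R)$ and we simply run the original algorithm, so assume $1 \le p \le \log_2 n$.

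The observation that makes block-simulation possible is that importances are monotone and grow slowly. Within a block starting at iteration $r$, for every $r' \in \{r,\dots,r+k-1\}$ we have $\qe{r'}_e \le 2^{k}\qe{r}_e$ (each iteration at most doubles an importance, by the update rule) and $\Qe{r'} \ge \Qe{r}$ (importances never decrease and weights are positive), so the iteration-$r'$ inclusion probability satisfies $\pe{r'}_e \le 2^{k}\pe{r}_e$. Hence if at the start of the block we spend one pass to compute the normalization $\Qe{r}$ (computing $\qe{r}_e = 2^{c(e,r)}$ on arrival from the stored covers, exactly as in \Cref{lem:semi-stream-general}) and a second pass to draw an \emph{oversampled} set $S$ in which each edge $e$ is kept independently with probability $\tilde p_e := \min\{1,\,2^{k}\pe{r}_e\}$, then $\tilde p_e \ge \min\{1,\pe{r'}_e\}$ for every iteration $r'$ of the block, i.e. $S$ stochastically dominates the iteration-$r'$ sample for every $r'$ in the block. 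Because $\sum_e \tilde p_e \le 2^{k}\cdot\tfrac{8n\ln(nW)}{\eps} = O(n^{1+1/p}\log(n)/\eps)$, a Chernoff bound gives $|S| = O(n^{1+1/p}\log(n)/\eps)$ with exponentially high probability, i.e. $O(n^{1+1/p}\log^2(n)/\eps)$ bits; storing $S$ together with the explicit covers $(y^{(r')},z^{(r')})$ of all $R$ iterations (each $O(n\log n)$ bits via \Cref{fact:cunningham-marsh}, for $O(n\log^2(n)/\eps)$ bits total) stays within the claimed budget. We then process the block entirely in memory: for $r' = r,\dots,r+k-1$, compute an optimal max-weight matching $M^{(r')}$ and an optimal min-value odd-set cover $(y^{(r')},z^{(r')})$ of $S$ under the original weights, and update importances implicitly as in Line~\eqref{line:we-defined-this}.

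For correctness, one checks that replacing the iteration-$r'$ sample by the single oversampled set $S$ does not damage the analysis of \Cref{thm:alg-general}. \Cref{lem:Qe-large-general} is untouched: it is a purely deterministic consequence of the computed $M^{(r')}$ and $(y^{(r')},z^{(r')})$ being optimal on whatever set was used (here $S$), via \Cref{fact:edmonds}. For \Cref{lem:Qr-small-general}, the only property of the sampling used is a \emph{one-sided} (lower) bound on inclusion probabilities: for a potential odd-set cover $(y,z)$ with $\sum_{e\in F(y,z)}\qe{r'}_e w(e) > \tfrac{\eps}{2}\Qe{r'}$ we still get $\sum_{e\in F(y,z)}\tilde p_e \ge \sum_{e\in F(y,z)}\pe{r'}_e > 4n\ln(nW)$, so $(y,z)$ is feasible on $S$ with probability at most $\exp(-4n\ln(nW))$, and the same union bound over the $\exp(3n\ln(nW))$ relevant choices of $(y,z)$ (again via \Cref{fact:cunningham-marsh}), then over all $R$ iterations across all blocks, shows that with exponentially high probability every computed cover satisfies \Cref{eq:clm1-general}. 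Thus \Cref{lem:Qr-small-general} and hence \Cref{thm:alg-general} go through verbatim, giving a $(1-\eps)$-approximate maximum weight matching in $O(p/\eps)$ passes and $O(n^{1+1/p}\log^2(n)/\eps)$ bits.

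The main point requiring care — and essentially the only one — is the interaction between oversampling and the union bound: we must verify that the oversampling factor $2^{k}$ may be taken as large as $n^{1/p}$ without breaking the space bound (which is exactly what forces $k \approx \tfrac1p\log n$ and hence $O(p/\eps)$ blocks), and that \emph{reusing the same} random set $S$ for all $k$ iterations of a block is still sound; the latter works precisely because the probabilistic part of the argument (\Cref{lem:Qr-small-general}) only ever needs the inclusion probabilities to be large enough, so inflating them — even to a common value across the block — is harmless.
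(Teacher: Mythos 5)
Your proposal is correct, but it takes a genuinely different route from the paper's. The paper proves the corollary by \emph{modifying} \Cref{alg:general} itself: it boosts the sampling rate in \Cref{eq:pe-general} by a factor $\eta = n^{1/p}$ and, crucially, changes the update rule in \Cref{eq:qe-general} to multiply importances of violated edges by $(1+\eta)$ rather than by $2$. The potential function then grows much faster per iteration, so balancing the upper bound from \Cref{lem:Qr-small-general} against the new lower bound $(1+\eta)^{\eps R}$ from \Cref{lem:Qe-large-general} yields $R = O(\log W/(\eps \log\eta)) = O(p/\eps)$ iterations, and the algorithm still spends only $O(1)$ passes per iteration. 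Your approach instead keeps the algorithm (and its factor-$2$ update rule) untouched and \emph{simulates} $k \approx \frac{1}{p}\log n$ consecutive iterations from a single oversampled set $S$, arguing that $S$ stochastically dominates each iteration's intended sample. The paper in fact anticipates a variant of your route (it calls it ``delayed sampling'' in the discussion preceding the corollary) but deliberately avoids it because the usual rejection-sampling implementation costs extra space and yields an indirect analysis; your version sidesteps rejection sampling by reusing $S$ verbatim across the block, recovering the same $O(n^{1+1/p}\log^2(n)/\eps)$ space.

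The one subtlety your write-up glosses over---and the reason the paper calls this an ``indirect analysis''---is that, within a block, the iteration-$r'$ importances $\qe{r'}$ for $r' > r$ are themselves functions of $S$ (through the covers computed on $S$ in earlier steps of the block), so the condition ``$\sum_{e\in F(y,z)}\qe{r'}_e w(e) > \frac{\eps}{2}\Qe{r'}$'' is a random event correlated with the sampling. What rescues the argument is that the oversampling probabilities $\tilde p_e$ are fixed at the start of the block, and the chain $\qe{r'}_e \le 2^k \qe{r}_e$, $\Qe{r'}\ge\Qe{r}$ holds for \emph{every} realization; hence the implication ``high uncovered importance at $r'$ $\Rightarrow \sum_{e\in F(y,z)}\tilde p_e > 4n\ln(nW)$'' is deterministic, and one can partition the fixed covers $(y,z)$ into those for which $\sum \tilde p_e$ exceeds the threshold (feasible with exponentially small probability) and those for which it does not (for which the bad event is impossible). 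This makes the union bound go through, but it is precisely the kind of indirection the paper's modification-based proof avoids. You should also note that since $S$ is fixed within a block and the matching/odd-set-cover computations use only the original weights $w$, the covers $(y^{(r')},z^{(r')})$ are identical for all $r'$ in a block; this is harmless for the analysis (and one could just compute them once), but it is worth making explicit that the argument does not rely on these covers changing across the block.
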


This in particular means that if instead of semi-streaming space, we allow the streaming algorithm to use $n^{1+\delta}$ space for any constant $\delta > 0$, then, the number of passes is only $O(1/\eps)$. 
This also implies that even for semi-streaming algorithms, the pass-complexity can be brought down to $O(\log{\!(n)}/(\log\log{\!(n)} \cdot \eps))$ passes by taking $p=(\log{n}/\log\log{n})$ and having $n^{1+1/p} = \Ot(n)$. 

A common technique for proving~\Cref{cor:eps-pass}, from an already-existing semi-streaming algorithm, is ``delayed sampling'' used, e.g., in~\cite{AhnG15,IndykMRUVY17}: 
we can ``fold'' multiple passes of the semi-streaming algorithm into a single-pass of the larger-space algorithm by oversampling the input first, and then do rejection sampling 
(see~\cite[step 2 of Section 1.3]{IndykMRUVY17} for more details). While this approach would work for us also, 
it would require (slightly) more space  (by some $n^{1/p}/\eps$ factors), and more importantly an indirect analysis. 

Instead, one can directly adjust our importance-sampling based approach in~\Cref{alg:general}. 
It simply involves sampling the edges by a factor of $O(n^{1/p})$ more and then  increasing the importance of violated edges in the algorithm, quite aggressively, by a factor of $O(n^{1/p})$ instead of only $2$. 
We now formalize this proof. 

\begin{proof}[Proof of Corollary~\ref{cor:eps-pass}]
Let $\eta$ be a parameter (that will be later chosen to be $n^{1/p}$). We prove this corollary by modifying~\Cref{alg:general} as follows: 
\begin{enumerate}
	\item Increase the sampling rate in \Cref{eq:pe-general} by a factor of $\eta$. 
	\item Increase the importance of any violated edge in \Cref{eq:qe-general} by a factor of $(1+\eta)$ instead. 
\end{enumerate}
The implications of these changes are: 
\begin{enumerate}
	\item The space of the algorithm is increased by an $O(\eta)$ factor because we store a larger sample. 
	\item The upper bound of $(1+\eps/2)^{R} \cdot W$ on the potential function in~\Cref{lem:Qr-small-general} still holds. This is because~\Cref{eq:clm1-general} now holds with $\eps/2$ replaced by $\eps/2\eta$ which
	``cancels out'' the effect of increasing importances by a $(1+\eta)$ factor instead. 
	
	\item On the other hand the lower bound of $2^{\eps R}$ on the potential function in~\Cref{lem:Qe-large-general} simply becomes $(1+\eta)^{\eps \cdot R}$ 
	given this new update rule on the importances. 
\end{enumerate}
	 Thus, by combining the steps above, we get that 
	\[
		\eps R \log{\eta} \leq 3\eps/4 \cdot R + \log{W}
	\]
	which gives us the bound of $R=O(\log{W}/(\log{\eta} \cdot \eps))$. 

Now, recall that in the proof of~\Cref{res:MWM}, we argued that we can take $W$ to be at most $n^{4}$. This implies that by setting $\eta = n^{1/p}$, we get 
\[
	R = O(\log{W}/(\log{\eta} \cdot \eps)) = O(\log{n}/(1/p \cdot \log{n} \cdot \eps)) = O(p/\eps), 
\]
This concludes the proof of~\Cref{cor:eps-pass}.
\end{proof}

Before moving on, we remark that in addition to~\cite{AhnG15} and~\Cref{cor:eps-pass},~\cite{BehnezhadDETY17} also provides an $O(n^{1.5}/\eps)$-space, $O(1/\eps)$-pass algorithm for maximum (cardinality) bipartite matching.   
The algorithm behind our~\Cref{res:MBM} (and its extension in~\Cref{cor:eps-pass}) turns out to be quite similar in hindsight to the algorithm of~\cite{BehnezhadDETY17} that also relies on a sample-and-solve approach
using vertex covers to guide their sampling; however, unlike our approach,~\cite{BehnezhadDETY17} sticks to uniform sampling and does not adjust any importances, which leads to the larger space-complexity of $O(n^{1.5})$ instead 
of (essentially) $n^{1+o(1)}$-space in our work for $O(1/\eps)$-pass algorithms.

\subsection{Derandomization via Cut Sparsifiers}\label{sec:ext-derandomization}

The failure probability of our algorithms in~\Cref{res:MBM} and~\Cref{res:MWM} are exponentially small, which is better than the typical `with high probability bounds' (namely, $1-1/\poly{(n)}$ bounds) in the same context. 
But, in fact, we can fully derandomize the algorithms also at the cost of increasing the space by $\poly(\log{n},1/\eps)$ factors and keeping the number of passes asymptotically the same. 
We state and prove this part only for unweighted graphs, but with more technical work, one can also extend this to weighted graphs (we omit the latter result as it requires too much of a detour). 

\begin{corollary}\label{cor:deterministic} For every  $p \geq 1$ and  $\eps \in (0,1)$, there is a \underline{deterministic} streaming algorithm for $(1-\eps)$-approximation of maximum cardinality matching
	in $\Ot(n^{1+1/p}/\eps^{2})$ space and $O(p/\eps)$ passes. 
\end{corollary}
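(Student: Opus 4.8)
\textbf{Proof proposal for \Cref{cor:deterministic}.}

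The plan is to derandomize \Cref{alg:bipartite} (in its $\eta$-boosted form from \Cref{cor:eps-pass}) by replacing the random importance-sampling step with a deterministic construction that fulfills the \emph{only} probabilistic property the analysis uses, namely the statement of \Cref{eq:clm1-bipartite}: every minimum vertex cover $U^{(r)}$ of the ``sample'' must leave uncovered only a set $F^{(r)}$ of total importance at most $(\eps/2\eta)\cdot\Qe{r}$. First I would observe that the weighted sample in iteration $r$ is really a multigraph $H^{(r)}$ on $V$ in which edge $e$ appears with multiplicity proportional to its importance $\qe{r}_e$ (scaled so the total size is $\Ot(n/\eps)$); the random sample was just a sparsifier of $H^{(r)}$. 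So it suffices to construct \emph{deterministically and in small space} a sparse (multi)graph $\widetilde H^{(r)}$ with $\Ot(n/\eps^2)$ edges such that every cut of $\widetilde H^{(r)}$ is within $(1\pm\eps)$ of the corresponding cut in $H^{(r)}$ — i.e.\ a $(1\pm\eps)$ cut sparsifier — and then compute $M^{(r)},U^{(r)}$ on $\widetilde H^{(r)}$ instead of on a random sample. The point is that vertex-cover feasibility is a ``cut'' condition: $U$ fails to cover some positive-importance edge in $H^{(r)}$ iff the bipartite-style cut of $H^{(r)}$ separating the uncovered vertices has positive weight, and a cut sparsifier preserves this up to $(1\pm\eps)$, which is enough to run the potential-function argument of \Cref{lem:Qr-small-bipartite} and \Cref{lem:Qe-large-bipartite} verbatim (with the constants in $R$ adjusted by an $O(1)$ factor).

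The key steps, in order, are: (1) reduce, exactly as in the proof of \Cref{res:MWM}, to the case where importances (equivalently edge multiplicities) are polynomially bounded, so $H^{(r)}$ has $\poly(n)$ total edge weight and a sparsifier of size $\Ot(n/\eps^2)$ exists; (2) invoke a known \emph{deterministic semi-streaming cut-sparsifier} construction — e.g.\ the merge-and-reduce spectral/cut sparsifiers of Batson–Spielman–Srivastava type as implemented in the streaming model, or the deterministic sparsifier of \cite{AhnGM12}-style spanner/sketch machinery — which maintains, in $\Ot(n/\eps^2)$ space and one pass, a $(1\pm\eps)$ cut sparsifier of the weighted graph presented in the stream; (3) in pass $r$, feed each arriving edge $e=(u,v)$ into this sparsifier routine with weight equal to its implicitly-stored importance $\qe{r}_e=2^{c(e,r)}\cdot(1+\eta)^{?}$ (computed from the stored bits $b(v,r')$ exactly as in \Cref{lem:semi-stream-bipartite}), obtaining $\widetilde H^{(r)}$ at the end of the pass; (4) compute a maximum matching $M^{(r)}$ and minimum vertex cover $U^{(r)}$ of $\widetilde H^{(r)}$ offline, store the membership bits $b(v,r)$, and iterate; (5) re-run the analysis: since every cut of $\widetilde H^{(r)}$ is a $(1\pm\eps)$-approximation of the same cut in $H^{(r)}$, the deterministic analogue of \Cref{eq:clm1-bipartite} holds with $\eps/2$ replaced by, say, $\eps$, and with the multiplicative-weight update by $(1+\eta)$ this still ``cancels out'' as in \Cref{cor:eps-pass}, giving $R=O(p/\eps)$ iterations and hence $O(p/\eps)$ passes; (6) account for space: $\Ot(n^{1+1/p}/\eps^2)$ for the oversampled sparsifier (the $n^{1/p}$ factor coming, as in \Cref{cor:eps-pass}, from the larger sample needed to fold $\Theta(\log n/(p\log n))$-for-$p$ passes), plus $O(R)=\Ot(1/\eps)$ bits per vertex for the stored cover-membership history, which is absorbed into $\Ot(n^{1+1/p}/\eps^2)$.

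The main obstacle I anticipate is step (2)/(3): one needs a cut-sparsifier primitive that is simultaneously \emph{deterministic}, \emph{single-pass}, \emph{$\Ot(n/\eps^2)$-space}, and — crucially — able to handle \emph{edge weights presented implicitly and changing between passes}. The last point is not actually problematic because a fresh sparsifier is built from scratch in each pass and the implicit weight of each edge is a fixed deterministic function of the stored history at that point; but getting a clean citable deterministic streaming cut sparsifier (as opposed to the ubiquitous randomized linear-sketch ones) is the delicate part, and is presumably why the corollary is stated only for unweighted graphs — there the sparsifier is of a graph with integer multiplicities bounded by $2^{R}=\poly(n)$, a regime where deterministic merge-and-reduce sparsification is well understood, whereas the weighted case would compound weight ranges multiplicatively across iterations and require the extra technical work the authors allude to. A secondary, more routine, check is that Könı́g duality still gives the needed matching/vertex-cover equivalence on $\widetilde H^{(r)}$ (which is bipartite, being a reweighting of the bipartite input), so that \Cref{lem:Qe-large-bipartite} goes through with $U^{(r)}$ now a cover of the sparsifier rather than of a random sample — this is immediate since the sparsifier is supported on a subset of the original (bipartite) edge set.
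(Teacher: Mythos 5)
Your high-level plan matches the paper's: replace the random importance-sampling step with a deterministic streaming cut sparsifier of the importance-weighted graph, computed from scratch each pass via merge-and-reduce (the paper cites~\cite{McGregor14,BatsonSS09,RajagopalanML98} for exactly this primitive), and then re-run the potential-function analysis. You also correctly observe that the bottleneck is to re-establish the single probabilistic claim, namely~\Cref{eq:clm1-bipartite} (or \Cref{eq:clm1-general}). However, the crux of your argument is the sentence that vertex-cover feasibility is a ``cut condition'' that the sparsifier preserves, and this is where there is a genuine gap.

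The quantity you need to control is $\sum_{e \in F(U)} q_e$, the total importance of edges with \emph{both} endpoints outside $U$. This is the total edge weight of the induced subgraph $G[V\setminus U]$, and it is \emph{not} a cut of $G$. In the bipartite case, the only cut of the form $(S, V\setminus S)$ you could try is $(L\setminus U,\, V\setminus (L\setminus U))$: a sparsifier does preserve this, and it is indeed at least $\sum_{e\in F(U)} q_e$, but it also contains edges from $L\setminus U$ to $R\cap U$, which \emph{are} covered by $U$. So the conclusion ``the sparsifier retains a positive-weight edge across that cut'' does not imply the sparsifier retains an \emph{uncovered} edge, and your contradiction does not go through. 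The paper fills this gap with an additional combinatorial step: take a maximum cut $(A,B)$ of the induced subgraph $G[V\setminus T]$ (respecting the odd sets in the support of $z$), use the fact that a max cut carries at least half the edge weight to get $cut_G(A,B) \geq (\eps/4)\cdot Q$, and then transfer this to the sparsifier via the identity $cut(A) + cut(B) - cut(A\cup B) = cut(A,B)$, each term of which \emph{is} a genuine cut of $G$ and hence is preserved by the sparsifier up to $\pm(\eps/100)\cdot Q$. Only then does one get a positive-weight sparsifier edge that lies entirely inside $V\setminus T$ and outside all $S_i$, i.e.\ an edge that $(y,z)$ truly fails to cover. Without this max-cut-plus-inclusion-exclusion step, the chain of implications is broken; I'd say this is the central idea of the paper's proof and your proposal does not supply it or an equivalent.

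One secondary point: the corollary is for general (not necessarily bipartite) cardinality matching, so the paper runs this argument with \Cref{alg:general} and integral odd-set covers via the Cunningham--Marsh theorem; in particular one must first argue that the optimal $(y,z)$ is $\{0,1\}$-valued with disjoint odd sets before the max-cut step makes sense. Your write-up specializes to \Cref{alg:bipartite} and K\H{o}nig duality, which would prove a weaker bipartite-only version of the statement.
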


~\Cref{cor:deterministic} is proven via replacing the sampling step of the algorithm with \emph{cut sparsifiers} of~\cite{BenczurK96}: these are (re-weighted) subgraphs of the input
that preserve weights of cuts to within a $(1\pm \eps)$-approximation while having only $O(n\log{\!(n)}/\eps^2)$ edges. We note that in general cut sparsifiers are not good at preserving large matchings\footnote{There are examples wherein a cut sparsifier of graph with a perfect matching may only have a 
maximum matching of size $(n^{1/2+o(1)})$ edges (e.g., a union of a perfect matching plus $(n^{1/2+o(1)})$ vertices connected to all other vertices).} but appear to be good at preserving ``near feasible'' vertex covers and odd-set covers 
we need for our primal-dual analysis. Finally, it is known how to compute an $\eps$-cut sparsifier in the streaming model using a single pass and $\Ot(n/\eps^2)$ space deterministically using any deterministic static (non-streaming)
algorithm for this problem, say~\cite{BatsonSS09}; see, e.g.,~\cite{McGregor14} for this elegant and quite simple reduction (based on the \emph{merge-and-reduce} technique dating back to the work of~\cite{RajagopalanML98} on quantile estimation).

Before getting to the formal proof, which focuses on~\Cref{res:MWM} (and~\Cref{cor:eps-pass}), 
let us see an intuition for this result by focusing on derandomizing~\Cref{res:MBM} for MBM instead. Suppose in~\Cref{alg:bipartite}, instead of sampling edges proportional to importances, 
we pick a $\Theta(\eps)$-cut sparsifier $H$  of $G$ with edges weighted by the importances. Then, we simply pick a vertex cover of $H$ (ignoring the weights now). We claim that~\Cref{eq:clm1-bipartite} 
still holds. Let $U \subseteq V$ be a ``potential'' vertex cover so that the total importance of edges it does \emph{not} cover is $> (\eps/2) \cdot Q$, 
where $Q$ is the  importance of all edges in $G$ in this iteration. One can show that $H$ needs to contain at least one edge entirely in $V \setminus U$ 
to be able to be a, say, $(\eps/100)$-cut sparsifier of $G$. 
This means
that $U$ could have not been chosen as a vertex cover of $H$. Thus, vertex covers of $H$ satisfy~\Cref{eq:clm1-bipartite} and the rest of the analysis is the same (recall that $H$ is a subgraph of $G$ so
a ``good'' vertex cover always exist). We now formalize the proof. 

\begin{proof}[Proof of Corollary~\ref{cor:deterministic}]
	We focus on proving the result for semi-streaming algorithms; extending this approach to larger-space algorithms with fewer passes is exactly as in~\Cref{cor:eps-pass} and thus we do not repeat the argument. 
	
	The algorithm is the following. Instead of sampling edges in~\Cref{alg:general}, we compute an $(\eps/100)$-cut sparsifier $H$ of $G$ whose edges are weighted by the importances in this iteration. 
Then, we compute an odd-set cover $(y,z)$ of $H$, ignoring all edge weights in this step, and continue exactly as before. Given that a cut sparsifier can be computed in $\Ot(n/\eps^2)$ space in the semi-streaming model (see~\cite{McGregor14}), 
we will obtain the desired deterministic algorithm. 

Recall that the sampling step was only used in the analysis in~\Cref{lem:Qr-small-general} and in particular to establish~\Cref{eq:clm1-general} with an exponentially high probability. We instead show that this new approach
deterministically satisfies~\Cref{eq:clm1-general}. The rest of the proof of~\Cref{cor:deterministic} then follows verbatim as in~\Cref{res:MWM}. Thus, 
we only need to show the following: 
\begin{itemize}
	\item Let $H$ be an $(\eps/100)$-cut sparsifier of $G=(V,E)$ under the edge weights $q_e$. Then, any odd-set cover $(y,z)$ of $H$ satisfies~\Cref{eq:clm1-general} deterministically, i.e., 
	\begin{align}
		\sum_{e \in F(y,z)} q_e \leq (\eps/2) \cdot Q, \label{eq:not-hold}
	\end{align}
	where $F(y,z)$ is the set of violated edges by $(y,z)$ in the unweighted graph $G$, and $q_e$ and $Q$ are the importance of edge $e \in E$, and total importance of all edges, respectively. 
\end{itemize}
We now prove this statement. 
In the following, for a graph $G=(V,E)$ and two disjoint sets of vertices $A,B \subseteq V$, we define $cut_G(A)$ and $cut_G(A,B)$ 
as the weight of the cuts $(A, V \setminus A)$ and $(A,B)$, respectively (we apply this to $G$ with weight function being edge importances, and to $H$ with the re-weighted weights of the sparsifier).  

Let $(y,z)$ be an optimal odd-set cover of $H$. By Cunningham-Marsh theorem (\Cref{fact:cunningham-marsh}), 
$y$ and $z$ are both integral and $\mathcal{F}(z) := \set{S \in odd(V) \mid z_S > 0}$ forms a laminar family. Moreover, given the optimality of $(y,z)$ and since $H$ is unweighted (when calculating the odd-set cover), 
we have that $y,z \in \set{0,1}^n$ which implies that $\mathcal{F}(z) = S_1,S_2,\ldots,S_s$ is actually a collection of disjoint sets, and is disjoint from the support of $y$, denoted by $T$. 
Notice that the set of violated edges by $(y,z)$ in $G$ are the ones
that are not inside $S_1,\ldots,S_s$, nor incident on $T$. 

Suppose towards a contradiction that~\Cref{eq:not-hold} does not hold. Let $(A,B)$ be a maximum cut of the graph $G[V \setminus T]$ among all cuts where each $S_i \in \mathcal{F}(z)$ is entirely on one side of the cut. 
Since a maximum cut always has weight at least half of the weight of edges in the graph, we have, 
\[
	cut_G(A,B) > (\eps/4) \cdot Q.  
\]
On the other hand, in any graph, we also have
\[
	cut_G(A) + cut_G(B) = cut_G(A \cup B) + cut_G(A, B). 
\]
Given that $cut_G(A),cut_G(B),cut_G(A \cup B) \leq Q$ trivially, and since $H$ is an $(\eps/100)$-cut sparsifier, 
\begin{align*}
	cut_H(A,B) &= cut_H(A) + cut_H(B) - cut_H(A \cup B) \\
	&\geq cut_G(A) + cut_G(B) - cut_G(A \cup B) - 3 \cdot (\eps/100) \cdot Q \\
	&= cut_G(A,B) - 3 \cdot (\eps/100) \cdot Q \\
	&\geq (\eps/4) \cdot Q - 3 \cdot (\eps/100) \cdot Q > 0,
\end{align*}
which implies that there is at least one edge between $A$ and $B$ in $H$. But recall that none of the edges between $A$ and $B$ were covered by $(y,z)$, contradicting the fact that $(y,z)$ was a 
feasible odd-set cover of $H$. This proves~\Cref{eq:not-hold}. 
\end{proof}

We shall remark that we were inspired by the use of cut sparsifiers in~\cite{AhnG15} for this part of the argument. Although, to our knowledge, the use of sparsifiers in~\cite{AhnG15} is for a different purpose
of their ``delayed sparsification'' and folding $O(1/\eps)$ iterations of their optimization method in $O(1)$ passes; we are instead using them for derandomization purposes (the algorithms of~\cite{AhnG15} are randomized despite using sparsifiers even
in insertion-only streams).

\subsection{Running Times of Our Algorithms}\label{sec:ext-runtime}

Given that the main resource of interest in the streaming model is the space, we did not put any emphasis on the runtime of our algorithms in the preceding discussions. 
It is clear that our algorithms run in polynomial time since finding maximum matchings and minimum vertex covers in bipartite graphs or minimum odd-set covers in general graphs can all
be done in polynomial time. However, our algorithms can be made more time efficient, captured by the following corollary. 

\begin{corollary}\label{cor:time-efficient}
	Both algorithms in~\Cref{res:MBM} and~\Cref{res:MWM} can be implemented in $\Ot(m/\eps^2 + n/\eps^3)$ time and the same asymptotic space and pass complexity. 
\end{corollary}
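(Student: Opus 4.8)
The plan is to pinpoint the two places where time is spent in \Cref{alg:bipartite} and \Cref{alg:general} as implemented in \Cref{lem:semi-stream-bipartite} and \Cref{lem:semi-stream-general}, and to remove the only genuinely expensive step---exact optimization on the sampled subgraph---by replacing it with an approximate primal--dual subroutine run to accuracy $\Theta(\eps)$.

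\emph{Time accounting.} Each of the $R=O(\log{n}/\eps)$ iterations is run in $O(1)$ passes exactly as before. The only non-trivial per-edge work during a pass is recomputing the importance $\qe{r}_e=2^{c(e,r)}$ of an arriving edge $e=(u,v)$, which costs $O(r)=O(R)$ time because it rescans the $\le r-1$ stored cover-bits of $u$ and $v$ (we cannot cache a per-edge counter without exceeding the $\Ot(n/\eps)$ space budget on dense inputs). Hence the stream-processing part costs $\sum_{r\le R}O(m\cdot r)=O(mR^2)=\Ot(m/\eps^2)$ in total (the one-pass preprocessing of \Cref{res:MWM} that prunes light edges adds only $O(m)$). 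What remains is, in each iteration, solving the matching/cover subproblem on the sample, which has $\Ot(n/\eps)$ edges with exponentially high probability.

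\emph{Speeding up the subproblems.} Solving these exactly (Hopcroft--Karp, or a blossom algorithm in the weighted case) costs about $|E_{\mathrm{sample}}|^{1.5}$, which overshoots the target, so I would instead settle for $(1-\delta)$-approximate solutions with $\delta=\Theta(\eps)$ a small enough constant multiple of $\eps$. These are computable in $\Ot(|E_{\mathrm{sample}}|/\eps)=\Ot(n/\eps^2)$ time: in the bipartite case, $O(1/\delta)$ Hopcroft--Karp phases give a $(1-\delta)$-approximate matching $M^{(r)}$, and taking $U^{(r)}$ to be the vertex cover read off from $M^{(r)}$ by Konig's layered-search construction yields a \emph{genuine} vertex cover of the sample of near-minimum size; in the weighted general case, the linear-time approximate weighted-matching algorithm of Duan and Pettie produces a $(1-\delta)$-approximate matching together with a near-optimal odd-set cover $(y^{(r)},z^{(r)})$ that its scaling framework keeps integral and laminar, as required by \Cref{fact:cunningham-marsh} and \Cref{lem:semi-stream-general}. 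Over the $R=\Ot(1/\eps)$ iterations this contributes $\Ot(n/\eps^3)$, for a total running time of $\Ot(m/\eps^2+n/\eps^3)$; the sample size and the pass count are untouched, so the space and pass bounds of \Cref{res:MBM} and \Cref{res:MWM} carry over unchanged.

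\emph{Robustness of the analysis.} It then remains to check that approximate subroutines do not break correctness. \Cref{lem:Qr-small-bipartite} and \Cref{lem:Qr-small-general} use only that the returned dual object is \emph{genuinely feasible on the sample}, which still holds, so they go through verbatim. \Cref{lem:Qe-large-bipartite} and \Cref{lem:Qe-large-general} change only in their constants: since $\card{U^{(r)}}\le\frac{1+\delta}{1-\delta}\card{M^{(r)}}$ (and analogously $\card{(y^{(r)},z^{(r)})}\le\frac{1+\delta}{1-\delta}\,w(M^{(r)})$), if every returned matching falls short of $(1-\eps)\mu(G)$ then at least an $(\eps-O(\delta))$-fraction of a fixed optimal matching stays uncovered in each round, forcing some edge's importance above $2^{(\eps-O(\delta))R}$; taking $\delta$ a sufficiently small constant times $\eps$ and enlarging $R$ by a constant factor restores the contradiction with the potential upper bound, exactly as in the proofs of \Cref{thm:alg-bipartite} and \Cref{thm:alg-general}. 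The part I expect to require the most care is the last clause of the speed-up step: extracting, from a \emph{fast} approximate matching routine, a dual certificate that is at once near-optimal in value and exactly feasible on the sample (and, in the general case, integral with laminar support). For bipartite graphs this is the classical fact that Konig's construction applied to a matching with no short augmenting path is already a near-minimum vertex cover; for general weighted graphs it hinges on the fact that the Duan--Pettie scaling algorithm maintains integral, laminar, near-feasible dual variables throughout its execution.
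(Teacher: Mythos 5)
Your proposal follows essentially the same route as the paper: replace the exact per-iteration solve on the $\Ot(n/\eps)$-edge sample with the Duan--Pettie $(1-\Theta(\eps))$-approximate matching plus its integral, laminar, near-optimal dual (Proposition 5.5 in the paper), observe that Lemmas 3.3/4.3 require only feasibility of the dual on the sample so they go through unchanged, and relax the hypotheses of Lemmas 3.4/4.4 by an $O(\eps)$ factor to absorb the approximation loss; the $\Ot(m/\eps^2)$ stream-processing term (from the $O(R)$ per-edge importance recomputation over $R$ passes) and the $\Ot(n/\eps^3)$ solver term are accounted for identically. The one small point on which you go a bit beyond the paper is explicitly flagging that the fast solver must emit an \emph{exactly} feasible dual on the sample (and, for general graphs, one that is integral with laminar support); the paper takes this as part of the stated guarantee of Duan--Pettie, so your caution there is appropriate but does not diverge from the paper's argument.
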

\noindent
We note that in the semi-streaming model, our algorithms can only handle values of $\eps$ such that $1/\eps = \poly\log{(n)}$, as otherwise the space of the algorithm will be more than the semi-streaming restriction of $\Ot(n)$ bits. 
In this regime, our algorithms run in near-linear time. 

We first recall the seminal algorithm of~\cite{DuanP14}---generalizing classical results in~\cite{HopcroftK73} for bipartite graphs---which we shall use as a blackbox in our algorithms. 

\begin{proposition}[\!\!\cite{DuanP14}]\label{prop:DuanP14}
	There exists an algorithm that given any graph $G=(V,E)$ and any parameter $\eps > 0$, outputs a $(1-\eps)$-approximate maximum weight matching and a $(1+\eps)$-approximate minimum odd-set cover of $G$ 
	in $O(m/\eps)$ time. Moreover, the support of the odd-set cover returned by the algorithm forms a laminar family. 
\end{proposition}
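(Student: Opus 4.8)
The plan is to recognize this as the main theorem of Duan--Pettie and to re-derive it along their route: \emph{weight scaling} in the sense of Gabow--Tarjan, layered on top of Edmonds's blossom machinery, run so that the algorithm produces the primal matching and the dual odd-set cover simultaneously, with the laminar structure falling out for free. First I would preprocess in $O(m)$ time: compute the maximum weight $w^{\ast}$, discard every edge of weight below $(\eps/n)\,w^{\ast}$ (their combined contribution to any matching is below $\eps\cdot\mu(G,w)$, so this costs only a $(1-\eps)$ factor and cannot increase the dual value we need), then scale and round the surviving weights to integers; after rescaling $\eps$ to some $\Theta(\eps)$ it suffices to solve the problem for integer weights in a bounded range and to output a matching of weight $\ge(1-\eps)\mu(G,w)$ together with an honestly \emph{feasible} odd-set cover of value $\le(1+\eps)\mu(G,w)$.

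The core loop maintains the standard Edmonds dual state relaxed by a slack parameter $\delta$: vertex potentials $y\in\IN^V$, a \emph{laminar} family of blossoms (nested odd vertex sets), nonnegative integral duals $z_S$ supported on that family, and a matching $M$, subject to the approximate complementary-slackness invariants of scaling algorithms --- near-domination $y_u+y_v+\sum_{S:\,e\in E[S]}z_S\ge w(e)-\delta$ for every $e$, near-tightness of matched edges, and fullness of every blossom with $z_S>0$. The laminarity asserted in the proposition is not a separate thing to prove: blossoms are created only by contracting odd cycles through already-contracted structure, so $\{S:z_S>0\}$ is nested by construction, exactly as in Edmonds's algorithm and consistent with the Cunningham--Marsh picture (\Cref{fact:cunningham-marsh}). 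The algorithm proceeds over a controlled number of scales, each halving $\delta$ toward $\approx\eps\cdot w^{\ast}$, where a scale is a bounded number of Hopcroft--Karp-style rounds of blossom-aware alternating search that augment $M$ along vertex-disjoint eligible paths while adjusting $y$ and $z$ to restore the tighter invariant.

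The step I expect to be the genuine obstacle --- and the technical heart of \cite{DuanP14} --- is bounding the total work by $O(m/\eps)$: a single search may create, grow, merge, and \emph{dissolve} many blossoms and perform many dual-adjustment steps, so only a careful amortized analysis keeps each scale near-linear, and one must additionally argue that only $O(\log(1/\eps))$ \emph{effective} scales are needed (i.e.\ that the dependence on the weight magnitudes and on $n$ does not survive into the running time, using that the bounded slack forces the relevant augmenting-path lengths to be $O(1/\eps)$). I would import this amortization and scale-count analysis from \cite{DuanP14} rather than reprove it; it is exactly the part that makes that paper nontrivial.

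Finally, termination. Once $\delta$ has dropped to $O(\eps)\cdot w^{\ast}$, the slackness invariants give $w(M)\ge\sum_{v}y_v+\sum_{S}\tfrac{\card{S}-1}{2}z_S-O(\eps\cdot\mu(G,w))$ (each of the $\le n/2$ matched edges loses at most $\delta$ against its dual, and $\mu(G,w)\ge w^{\ast}$), while weak LP duality for the odd-set cover polytope (\Cref{fact:edmonds}) gives the reverse inequality up to the domination slack. To turn $(y,z)$ into an honestly feasible odd-set cover I would shift every $y_v$ up by $\delta$: this restores \Cref{eq:fractional-odd-set-cover} exactly, does not touch $z$ and hence preserves laminarity, and raises $\card{(y,z)}$ by only $n\delta=O(\eps\cdot\mu(G,w))$ thanks to the preprocessing normalization. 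Combining the two inequalities shows $M$ is a $(1-\eps)$-approximate maximum weight matching and $(y,z)$ a $(1+\eps)$-approximate minimum odd-set cover whose support is laminar, which is the claim.
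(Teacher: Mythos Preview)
The paper does not prove this proposition at all: it is quoted as a black-box result from \cite{DuanP14} (introduced with ``We first recall the seminal algorithm of~\cite{DuanP14}\ldots'') and then invoked in the proof of \Cref{cor:time-efficient}. So there is nothing to compare against; your proposal is a sketch of someone else's paper, not of anything the present paper attempts.

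That said, as a summary of the Duan--Pettie route your outline is broadly on target --- weight scaling with relaxed complementary slackness, blossoms giving laminarity for free, and a final additive shift of $y$ to restore exact feasibility of the dual. The one place where your sketch is loose is the running-time accounting: the claim that ``only $O(\log(1/\eps))$ effective scales are needed'' is not how the $O(m/\eps)$ bound actually arises in \cite{DuanP14}; the number of scales still depends on the weight magnitude, and the $\eps^{-1}$ dependence comes instead from bounding the total number of augmentation/dual-adjustment phases across all scales (via the relaxed-slackness bound on augmenting-path lengths). You correctly flag this amortization as the real crux and defer it to \cite{DuanP14}, which is exactly what the present paper does as well --- only it defers the entire proposition, not just that step.
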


We can now present a proof of~\Cref{cor:time-efficient}. 

\begin{proof}[Proof of Corollary~\ref{cor:time-efficient}]
	The only modification we need for this corollary is that in each iteration of~\Cref{alg:bipartite} (for~\Cref{res:MBM}) or~\Cref{alg:general} (for~\Cref{res:MWM}), instead of finding an exact maximum matching or minimum vertex cover/odd-set cover, 
	we run the algorithm in~\Cref{prop:DuanP14} to find a $(1-\eps)$-approximation to these problems on the sampled graph. 
	
	As for the analysis,~\Cref{lem:Qr-small-bipartite} and~\Cref{lem:Qr-small-general} hold as before as they only relied on the feasibility of the dual solution, not their optimality. On the other hand,~\Cref{lem:Qe-large-bipartite} and
	\Cref{lem:Qe-large-general} relied on the optimality of the dual solutions in each step. However, both these results still hold with a $(1-\eps)$-approximation loss, namely, the conclusion of the lemmas hold as long as we assume
	\[
		\card{{M^{(r)}}} < (1-\eps)^2 \cdot \mu(G) \qquad \text{or} \qquad w({M^{(r)}}) < (1-\eps)^2 \cdot \mu(G,w);
	\]
	this is simply because these conditions, plus the guarantee of~\Cref{prop:DuanP14}, imply the same prior bounds on the value of optimum solution in the sampled graph as needed in~\Cref{lem:Qe-large-bipartite} and~\Cref{lem:Qe-large-general}. 
	The rest of the analysis follows verbatim as before. 
	
	All in all, this ensures that the total runtime of the algorithms is only $\Ot(m/\eps^2 + n/\eps^3)$ time: there are $O(\log{(n)}/\eps)$ passes 
	and each pass takes $\Ot(m/\eps)$ time for reading the stream and sampling the edges plus $\Ot(n/\eps^2)$ time for running the algorithm of~\cite{DuanP14} on the sampled subgraph of size $\Ot(n/\eps)$ edges. This concludes the proof. 
\end{proof}

An alternative way of looking at~\Cref{cor:time-efficient}, from a purely time-complexity point of view, is the following: to obtain a $(1-\eps)$-approximation to maximum (weight) matching on arbitrary graphs, 
we only need to have an oracle for finding a $(1-\Theta(\eps))$-approximation to the same problem on \emph{sparse} graphs with $\Ot(n/\eps)$ edges; running this oracle for $O(\log{(n)}/\eps)$ times on different sampled subgraphs of the input (plus some
low overhead computation for updating the weights and performing the sampling), gives us a $(1-\eps)$-approximation on any (not necessarily sparse) graph as well.

\subsection{Extension to Other Related Models}\label{sec:ext-other-models}
 Our algorithmic approach in this paper is quite flexible and easily extends to many other models.  In particular, given its sample-and-solve nature, the algorithm can be implemented via
a linear sketch (see~\cite{AhnGM12}), which also implies the following two results:

\begin{corollary}[Extension to {Dynamic streams}]
There is a randomized semi-streaming algorithm for $(1-\eps)$-approximation of weighted (general) matching in dynamic streams---with edge insertions and deletions---that for every $p \geq 1$, uses $\Ot(n^{1+1/p}/\eps)$ space and $O(p/\eps)$ passes. 
\end{corollary}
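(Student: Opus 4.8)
The plan is to mimic the semi-streaming implementation of \Cref{alg:general} (as in \Cref{lem:semi-stream-general}), but replace the random edge sampling with a \emph{linear sketch} that supports recovery from a dynamic stream, and then observe that the correctness analysis (i.e., the statements of \Cref{lem:Qr-small-general} and \Cref{lem:Qe-large-general}) goes through essentially unchanged. First I would recall that in \Cref{alg:general} the only interaction with the stream is: (i) computing the normalization factor $\Qe{r}$, and (ii) drawing an independent sample of each edge $e$ with probability $\pe{r}_e \propto \qe{r}_e \cdot w(e)$, which in the semi-streaming proof amounts to storing $O(n\log(nW)/\eps)$ sampled edges. Crucially, as in \Cref{lem:semi-stream-bipartite,lem:semi-stream-general}, the importance $\qe{r}_e$ of an edge is computed \emph{implicitly} as $2^{c(e,r)}$ from the stored dual solutions $(y^{(r')}, z^{(r')})$ of earlier iterations, which occupy only $O(n\log(n)\log(nW)/\eps)$ bits in total and do not depend on seeing the edge set again. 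So the whole algorithm is ``sample-and-solve'' with an adaptively reweighted, but locally computable, sampling distribution.

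Next I would invoke the standard machinery for importance sampling via linear sketches in dynamic streams, as in~\cite{AhnGM12} (and the $\ell_0$/$\ell_p$-sampler literature): maintain $O(\log n)$ independent sketches at geometric subsampling rates, so that at query time one can, for any target per-edge probability, recover a sample of the surviving edges together with their identities. Since the importances $\qe{r}_e = 2^{c(e,r)}$ are a function only of the vertex-level quantities $b(\cdot)$ / $(y^{(r')},z^{(r')})$ already stored, one can set up the sketch so that the effective inclusion probability of edge $e$ is proportional to $\qe{r}_e \cdot w(e)$ as prescribed by~\Cref{eq:pe-general}; concretely one hashes edge $e$ to a level determined by $\lceil \log(\qe{r}_e \cdot w(e))\rceil$ plus random subsampling, recovers all edges at the appropriate levels, and subsamples down. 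Because sketches are linear, edge deletions are handled for free, and the whole thing runs in a single sketching pass followed by local recovery; folding $O(1/\eps)$ iterations or oversampling by an $n^{1/p}$ factor to get $O(p/\eps)$ passes in $\Ot(n^{1+1/p}/\eps)$ space is then identical to the delayed-sampling argument of~\Cref{cor:eps-pass} and \cite{AhnG15,IndykMRUVY17}. The normalization factor $\Qe{r}$ can be recovered from an $\ell_1$-sketch (or from the same sketch hierarchy) up to a $(1\pm\eps)$ factor, which is more than enough precision.

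The correctness argument is then the same as in \Cref{sec:general}: the sketch-based sample has, for each edge $e$, inclusion probability at least $\pe{r}_e$ (we may always oversample), and the sampled edges are ``sufficiently independent'' that the Chernoff-type union bound over the $\exp(3n\ln(nW))$ candidate odd-set covers $(y,z)$ in the proof of \Cref{lem:Qr-small-general} still yields failure probability $\exp(-\Omega(n\ln(nW)))$ --- here one should use that $\ell_0$-samplers can be made to succeed with probability $1 - 1/\poly(n)$ per level and boosted, or simply argue via negative association of the sketch-induced inclusion events, whichever is cleaner. Given that \Cref{eq:clm1-general} holds, \Cref{lem:Qr-small-general} and \Cref{lem:Qe-large-general} and hence \Cref{thm:alg-general} are untouched, and the $W$-removal step from the proof of \Cref{res:MWM} applies verbatim to replace $\log W$ by $\log n$. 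I expect the main obstacle to be the bookkeeping needed to certify that a linear sketch can realize the \emph{weighted, importance-scaled} inclusion probabilities of~\Cref{eq:pe-general} with the right tail behavior while only storing $\Ot(n/\eps)$ words per level --- in particular, ensuring that the recovery failure probability of the sampler is small enough to survive the large ($\exp(\Theta(n\ln(nW)))$) union bound over potential odd-set covers, rather than just the usual $\poly(n)$ union bound; this may require either boosting the samplers with $\Theta(n)$ independent repetitions (which fits the space budget, since each sketch is $\Ot(1)$ words per vertex) or replacing the crude union bound by an argument using only pairwise/negatively-correlated properties of the sketch. Everything else --- dynamic updates, the pass/space tradeoff, the weight-to-$\poly(n)$ reduction --- is routine given the tools already developed in the paper and in~\cite{AhnGM12}.
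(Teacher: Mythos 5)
The paper does not actually prove this corollary; it only remarks that, being of a ``sample-and-solve'' form, the algorithm can be implemented via linear sketches in the spirit of~\cite{AhnGM12}, and then refers the reader to that literature. Your proposal fleshes out exactly that remark: route each arriving or departing edge $e$ into a sketch level determined by the locally computable quantity $\qe{r}_e \cdot w(e)$ (which depends only on the stored duals $(y^{(r')},z^{(r')})$ and on $w(e)$, not on the rest of the current stream), recover a sample at query time, and then reuse the analysis of~\Cref{sec:general} together with the pass--space tradeoff of~\Cref{cor:eps-pass}. This is the intended route, and you have laid it out accurately.

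You have also correctly put your finger on the one step that is genuinely not routine and that the paper's one-line justification glosses over: \Cref{lem:Qr-small-general} union-bounds over roughly $\exp\paren{3n\ln(nW)}$ candidate odd-set covers, and the proof of~\Cref{eq:clm1-general} derives the $\exp\paren{-4n\ln(nW)}$ bound from the \emph{independence} of the per-edge Bernoulli inclusion events. A sketch-recovered sample neither has independent inclusion indicators nor, off the shelf, success probability better than $1 - 1/\poly(n)$, so the argument does not transfer verbatim. Your proposed fixes (boosting the sampler, or arguing via negative correlation of the hash-based inclusions) are plausible and the space budget accommodates them, but they are left as sketches; in particular you would still need to argue that the recovered set stochastically dominates an independent Bernoulli sample in the sense required to preserve the product bound $\prod_{e \in F(y,z)}(1-\pe{r}_e)$, which is not automatic. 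A cleaner route that sidesteps this entirely is to compute, in the dynamic stream, a cut sparsifier of $G$ with edges weighted by the importances $\qe{r}_e \cdot w(e)$ (this is feasible via linear sketches in $\Ot(n/\eps^2)$ space per pass) and then invoke the argument from~\Cref{cor:deterministic}: there,~\Cref{eq:not-hold} holds \emph{deterministically} once the sparsifier is correct, so one only needs the usual $1-1/\poly(n)$ sketch guarantee per pass and a union bound over $O(p/\eps)$ passes rather than over odd-set covers. That gives $O(p/\eps)$ passes with an $\eps^{-2}$ (rather than $\eps^{-1}$) space dependence, matching~\Cref{cor:deterministic}; reconciling this with the stated $\Ot(n^{1+1/p}/\eps)$ bound would require making the direct sketch-sampling route fully rigorous along the lines you outlined.
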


\begin{corollary}[Extension to {Massively Parallel Computation (MPC)}]
There is a randomized MPC algorithm for $(1-\eps)$-approximation of weighted (general) matchings that for every  $p \geq 1$, uses machines of
memory $\Ot(n^{1+1/p}/\eps)$ and $\Ot(n^{1+1/p}/\eps)$  global memory beside the input size, and $O(p/\eps)$ rounds. 
\end{corollary}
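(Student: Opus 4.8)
The plan is to simulate in the MPC model the $O(p/\eps)$-pass algorithm underlying \Cref{cor:eps-pass}---that is, \Cref{alg:general} with the sampling rate of \Cref{eq:pe-general} scaled up by a factor $\eta=n^{1/p}$ and with importances updated by a factor $(1+\eta)$ in \Cref{eq:qe-general} in place of $2$---using that each of its iterations consists only of the ``sample-and-solve'' primitives, all of which admit $O(1)$-round MPC implementations. First I would, exactly as in the proof of \Cref{res:MWM}, spend $O(1)$ rounds to compute the maximum edge weight $w^*$ (a single aggregation), discard every edge of weight below $(\eps/n)\cdot w^*$, and rescale, so that all weights become integers bounded by $n^4$; then $\log W=O(\log n)$, hence $R=O(\log W/(\log\eta\cdot\eps))=O(p/\eps)$, and we may also assume $p=O(\log n)$ since $\eta=n^{1/p}<2$ is meaningless otherwise.

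Distribute the (re-weighted) edges arbitrarily over machines of memory $s:=\Ot(n^{1+1/p}/\eps)$ each, so there are at most $\lceil m/s\rceil\le\Ot(n)$ machines (as $m\le n^2$ and $s\ge n$), and designate one of them as a coordinator. Every machine stores, alongside each of its edges $e$, a counter recording how many times $e$ has been violated so far, so that the current importance $\qe{r}_e=(1+\eta)^{c(e,r)}$ can be recomputed locally (this costs only $O(\log(p/\eps))$ bits per edge and is the only persistent extra state). Each iteration $r$ then runs in $O(1)$ MPC rounds via standard tree-based aggregation and broadcast: each machine computes $\sum_e \qe{r}_e\cdot w(e)$ over its edges, these partial sums are aggregated into $\Qe{r}$ and broadcast back; each machine independently samples each of its edges with probability $\eta\cdot\pe{r}_e$ as in \Cref{eq:pe-general}; every sampled edge is shipped to the coordinator, where by a Chernoff bound the sample has $\Ot(n^{1+1/p}/\eps)$ edges with exponentially high probability and so fits in its memory; the coordinator computes a maximum-weight matching $M^{(r)}$ and an optimal odd-set cover $(y^{(r)},z^{(r)})$ of the sample---of description size $\Ot(n)$ by the integrality and laminarity of \Cref{fact:cunningham-marsh}---and broadcasts $(y^{(r)},z^{(r)})$ to all machines, each of which then increments the counter of each of its edges violated by this cover (detected via \Cref{eq:fractional-odd-set-cover}). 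After $R$ iterations the coordinator outputs the heaviest $M^{(r)}$. The $(1-\eps)$-approximation guarantee is then exactly that of \Cref{thm:alg-general} as modified in \Cref{cor:eps-pass}; the round count is $O(R)=O(p/\eps)$; per-machine memory is $\Ot(n^{1+1/p}/\eps)$; and the total memory is $O(m)$ for the input and counters plus $\Ot(n^{1+1/p}/\eps)$ of workspace (the coordinator's sample, and the per-iteration $\Ot(n)$-word dual broadcast that each machine overwrites between iterations), which is the claimed bound.

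The only real subtlety is the MPC bookkeeping for the aggregation and broadcast steps---not any new algorithmic idea: one must check that the number of machines is small enough (it is $\Ot(n)\le s$, because $m\le n^2$) that both summing partial sums to the coordinator and broadcasting an $\Ot(n)$-word message back take $O(1)$ rounds and do not inflate the global memory beyond $O(m+n^{1+1/p}/\eps)$. Equivalently---and this is the route hinted just before the corollary, which also yields the dynamic-streams statement---one can observe that a single iteration of \Cref{alg:general} is a linear sketch of dimension $\Ot(n^{1+1/p}/\eps)$ of the importance-weighted edge indicator vector, and linear sketches compose across machines in $O(1)$ MPC rounds simply by adding the per-machine sketches and letting the coordinator recover the sample and solve; the direct simulation above is merely more transparent in the insertion-only setting.
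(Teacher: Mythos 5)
The paper does not actually prove this corollary: it merely observes that the sample-and-solve algorithm admits a linear-sketch implementation and defers the MPC details entirely to the cited references, also noting that~\cite{AhnG15} achieved analogous corollaries. So there is no paper proof to compare against, and your proposal is best judged on its own terms. On that basis, the overall plan is the natural one and is essentially sound: preprocess to get $W\le n^{4}$; run the $\eta=n^{1/p}$ variant of~\Cref{alg:general}; per iteration, aggregate $\Qe{r}$ by a tree, sample locally, ship the $\Ot(n^{1+1/p}/\eps)$-edge sample to a coordinator, solve there, and disseminate the dual so machines can update importances. The round count $O(p/\eps)$ and per-machine memory $\Ot(n^{1+1/p}/\eps)$ are fine, the Cunningham--Marsh compression of $(y^{(r)},z^{(r)})$ to $\Ot(n)$ words is the right observation, and the counter-based implicit representation of importances mirrors the paper's streaming implementation.

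Where your accounting breaks is the global-memory bound for the broadcast. You treat the per-iteration ``$\Ot(n)$-word dual broadcast'' as $\Ot(n)$ words of workspace, but in MPC a broadcast means every machine holds its own copy: with $K=\Theta(m/s)$ machines and $s=\Ot(n^{1+1/p}/\eps)$, the peak extra memory is $\Ot(Kn)=\Ot(mn/s)=\Ot(m\eps/n^{1/p})$, not $\Ot(n)$. For $m=\Theta(n^{2})$ this is $\Ot(n^{2-1/p}\eps)$, which exceeds the claimed $\Ot(n^{1+1/p}/\eps)$ once $p>2$ and $\eps$ is not polynomially small (e.g., constant $\eps$ with $p=\Theta(\log n)$). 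So as written the ``$\Ot(n^{1+1/p}/\eps)$ global memory beside the input size'' does not follow. You need an extra idea: either (i) avoid replicating the dual on every machine, e.g.\ by checking~\Cref{eq:fractional-odd-set-cover} via an $O(1)$-round distributed join between the $\Ot(n)$ dual records (vertices plus laminar sets, which with two endpoints per edge and the nesting of the laminar family can be pushed to edges by sorting) and the edge records, so the extra memory stays $\Ot(m+n)$ transiently and is released, with the understanding that $O(m)$-order ephemeral workspace is absorbed into the input allotment; or (ii) first sparsify the input (as in~\Cref{cor:deterministic}) so that $K$ itself becomes small; or (iii) state the corollary only for $p\le 2$, where $\Ot(Kn)\le \Ot(n^{1+1/p}/\eps)$ does hold. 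The same caveat applies to the linear-sketch route you sketch at the end: composing the per-machine sketches at the coordinator is fine, but reconstituting the current importance vector on each machine before sketching has exactly the same dissemination cost.
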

As this is not the focus of the paper, we omit the definition and details of the models and instead
refer the interested to~\cite{AhnGM12,McGregor14} and~\cite{KarloffSV10,BeameKS13} for each model, respectively. 
We only note that the prior results in~\cite{AhnG15} also achieved similar corollaries but this is not the case for the approach of~\cite{AssadiJJST22}.

\subsection{Explicit Connections to MWU and Plotkin-Shmoys-Tardos Framework}\label{sec:MWU}

It turns out that our algorithms can be cast in the Plotkin-Shmoys-Tardos (PST) Framework~\cite{PlotkinST91} for solving covering/packing LPs via MWU---despite the fact that the number of iterations is $O(\log{\!(n)}/\eps^2)$ in this framework---by making the following observation (this discussion assumes 
a basic familiarity with this framework; see~\cite{AroraHK12} for a quick introduction). 

Firstly, suppose we use the PST framework for solving the (fractional) vertex/odd-set cover LP, which translates
to maintaining ``MWU weights'' over the edges. The goal, perhaps counter intuitively, is to \emph{fail} in finding a small vertex/odd-set cover, which implies
we have found a ``witness'' to the existence of a large matching in the graph. The \emph{oracle} used for the PST framework
can be implemented by our sampling approach. The key observation is that this oracle is \emph{extremely efficient} compared to a typical oracle, in that, it achieves a very accurate solution with a very small \emph{width}. This leads
to something interesting: the weights of the variables, under the ``cautious'' update rules of MWU, grows so slowly that the \emph{same} oracle solution remains \emph{approximately} valid for the next $O(1/\eps)$ iterations! 

The implication of the above for semi-streaming algorithms is that, effectively, one only needs to 
rerun the oracle, using another pass over the stream, for every $O(1/\eps)$ iterations of the framework. This allows running the $O(\log{\!(n)}/\eps^2)$ iterations of this framework in $O(\log{\!(n)}/\eps)$ passes. 

We shall however caution the reader that while the above intuition is morally true, implementing the algorithm and following the standard analysis this way is quite ``messy'' and does not seem to yield to a necessarily simple algorithm nor analysis. 
Thus, we find the direct proof presented in the paper much more illuminating and opted to provide that instead\footnote{We should also add that this connection was  only made \emph{in hindsight} after having the new algorithm and analysis.}. 


\section{Concluding Remarks and Open Questions}\label{sec:concluding}

In this paper, we presented a rather complete simplification of the prior $O(\log{\!(n)}/\eps)$-pass algorithms of~\cite{AhnG15} in our~\Cref{res:MBM} and~\Cref{res:MWM}. 

The key open question at this point---which was also the key motivation behind this work itself---is to obtain the best of both worlds among the two types of pass-complexity of semi-streaming algorithms obtained for bipartite (cardinality) matching: the $O(\log{\!(n)}/\eps)$ passes of~\cite{AhnG15,AssadiJJST22} and~\Cref{res:MBM}, and the $O(1/\eps^2)$ passes of~\cite{AhnG11,AssadiLT21}. 
\begin{quote}
	\textbf{Open question 1:} \emph{Can we design a semi-streaming algorithm for maximum bipartite cardinality matching with $O(1/\eps)$ passes?}
\end{quote}

We would like to make a (rather bold) conjecture that the \emph{``right''} pass-complexity of this problem might be even (much) lower than $O(1/\eps)$ passes. But, at this point, we seem to be 
far from achieving such results or ruling out their possibilities\footnote{\cite{AssadiJJST22} provides a semi-streaming $n^{3/4+o(1)}$-pass algorithm for solving MBM exactly, which can be seen as $(1-\eps)$-approximation for $\eps=1/n$. Thus, 
at least for  very small values of $\eps$, we already know $\ll 1/\eps$-pass algorithms.}.  

A slightly less exciting question than the above is to \emph{significantly} reduce the pass-complexity of the ``constant-pass'' algorithms for maximum matching in general graphs
in~\cite{FischerMU22,HuangS23} to match the results for MBM in~\cite{AssadiLT21} (see~\Cref{footnote:eps}).  
In particular,  
\begin{quote}
	\textbf{Open question 2:} \emph{Can we obtain a semi-streaming algorithm for maximum matching in general graphs (weighted or unweighted) with $O(1/\eps^2)$ passes?}
\end{quote}

We hope that by simplifying the state-of-the-art, our results in this paper can pave the way for addressing these questions. Note that as stated earlier, we can indeed provide a positive answer to both questions for the (strictly) \emph{more relaxed}
case when the space of the algorithms is  $n^{1+\delta}$ for  constant $\delta > 0$ (this was also already known by the prior work of \cite{AhnG15}). 



\section*{Acknowledgement} 

Many thanks to Soheil Behnezhad, Shang-En Huang, Peter Kiss, Rasmus Kyng, and Thatchaphol Saranurak for helpful discussions and to
the organizers of the DIMACS Workshop on ``Modern Techniques in Graph Algorithms'' (June 2023)---Prantar Ghosh, Zihan Tan, and Nicole Wein---where these conversations happened. 

I am also grateful to Aaron Bernstein, Aditi Dudeja, Arun Jambulapati, Michael Kapralov, Sanjeev Khanna, Kent Quanrud, and Janani Sundaresan for various helpful discussions about this problem over the years. 

Additionally, I would like to thank Soheil Behnezhad for pointing out the similarity of~\Cref{res:MBM} with the $O(n^{1.5}/\eps)$-space $O(1/\eps)$-pass streaming algorithm of~\cite{BehnezhadDETY17} for MBM (as discussed in~\Cref{sec:ext-few-pass}),
and to Vikrant Ashvinkumar and Lap Chi Lau for many helpful comments on the presentation of this paper. 
 
 Finally, I am quite grateful to the anonymous reviewers of TheoretiCS for numerous detailed and insightful comments, and for suggesting simpler proof of~\Cref{lem:Qe-large-bipartite} (and by extension~\Cref{lem:Qe-large-general}) that are 
 presented in the current version of the paper. 

\printbibliography

\end{document}